\definecolor{newcolor}{rgb}{.8,.349,.1}
\newtheorem{theorem}{Theorem}
\newtheorem{corollary}[theorem]{Corollary}
\newtheorem{lemma}[theorem]{Lemma}
\newtheorem{definition}[theorem]{Definition}
\newtheorem{remark}[theorem]{Remark}
\newtheorem{example}[theorem]{Example}
\author[J. Espinoza]{Jes\'us F. Espinoza}
\email{jesus.espinoza@mat.uson.mx}
\thanks{Corresponding author: jesus.espinoza@mat.uson.mx (Jes\'us F. Espinoza)}
\author[R. Hern\'andez]{Rosal\'ia Hern\'andez-Amador}
\email{rosalia.hdez@mat.uson.mx}
\author[H. Hern\'andez]{H\'ector A. Hern\'andez}
\email{hectorhdez@gmail.com}
\author[B. Ramonetti]{Beatriz Ramonetti-Valencia}
\email{beatrizramonetti92@gmail.com}
\address{Departamento de Matem\'aticas, Universidad de Sonora, M\'exico.}
\keywords{Disk system;  generalized \v{C}ech complex; \v{C}ech scale; generalized Vietoris-Rips Lemma; miniball problem}
\title{A numerical approach for the filtered generalized \v{C}ech complex}
\begin{document}

\begin{abstract}
In this paper, we present an algorithm to compute the filtered generalized \v{C}ech complex for a finite collection of disks in the plane, which don't necessarily have the same radius.

The key step behind the algorithm is to calculate the minimum scale factor needed to ensure rescaled disks have a nonempty intersection, through a numerical approach, whose convergence is guaranteed by a generalization of the well-known Vietoris-Rips Lemma, which we also prove in an alternative way, using elementary geometric arguments.

We present two applications of our main results. We give an algorithm for computing the 2-dimensional filtered generalized \v{C}ech complex of a finite collection of $d$-dimensional disks in $\mathbb{R}^d$. In addition, we show how the algorithm yields the minimal enclosing ball for a finite set of points in the plane.
\end{abstract}

\maketitle

% MSC2010: 68U05, 65D17, 05E45

\section{Introduction}
Recently, in the study of data point clouds from a topological approach (cf. \cite{Carlsson:Topology-Data, Carlsson:2014, Ghrist:2008, Carlsson-et-al:2013, Zomorodian-Carlsson:2005}) the need to develop algorithms to construct different simplicial structures has arisen, such as the Vietoris-Rips complex, the \v{C}ech complex, the PL-lower star complex, etc. (cf. \cite{edelsbrunner:2010, Zomorodian2010FastCO}).

Of particular interest to us is the generalized \v{C}ech complex structure: whereas the standard \v{C}ech complex is induced by the intersection of a collection of disks with fixed radius, the generalized version admits different radii (see \cite{CechGenelized}); when radiuses are rescaled, using the same scale factor each time, the corresponding simplicial complexes forms the filtered generalized \v{C}ech complex.

There exist efficient algorithms to calculate the standard \v{C}ech complex (e.g. \cite{DANTCHEV2012708}), and software currently available to obtain the associated filtration (cf. \cite{Dionysus:2018, Otter:2017}); also, in \cite{Kerber:2013} the authors propose an algorithm to approximate the \v{C}ech filtration. On the other hand, we can find algorithms to calculate the generalized \v{C}ech complex (e.g. \cite{CechGenelized}), however as far as we know, there are neither algorithms nor software to provide the filtered generalized \v{C}ech complex. In the present work we show an algorithm to compute the filtered generalized \v{C}ech complex for a finite collection of disks, specifically, in the plane. Actually, we also show an algorithm to build up to the 2-dimensional filtered generalized structure (or 2-skeleton), for higher dimensional disk systems, which many applications only require, as we can see in \cite{Bendich:2016, Silva-Ghrist:2007, Goldfarb:2014, Ramamoorthy:2016, Robins:2016}.

The key step behind these proposed algorithms, is to calculate the minimum scale factor (called \textit{\v{C}ech scale}) needed to ensure that the rescaled disks have a nonempty intersection; the generalized Vietoris-Rips Lemma over multiple radii will allow us to calculate these scales numerically. 

We must emphasize that, our main algorithm (Algorithm \ref{Algorithm:Cech-scale-triplets}) is only generalizable to higher dimensional disk systems to obtain the 2-dimensional filtered generalized \v{C}ech structure, as we show as an application. Additionally, we show how our algorithm yields the minimal enclosing ball for a finite set of points in the plane.

This paper is organized as follows. 
In Section \ref{Section:Rips.Cech.systems} we introduce basic notions and notation which will be used throughout the paper. We define the Vietoris-Rips system and the \v{C}ech system, associated to a finite collection of closed disks in the euclidean space (or \textit{disk system}) in terms of their intersection. We also introduce the fundamental notions of Vietoris-Rips scale and \v{C}ech scale for a disk system, as the infimum over all rescaling factors such that the disk system becomes a Vietoris-Rips system or a \v{C}ech system respectively. In Lemma \ref{Lemma:VR-generalized} we state and prove a generalization, over multiple radii, of the well-known Vietoris-Rips Lemma \cite[Th. 2.5]{Silva-Ghrist:2007} using elementary geometric arguments. In \cite{MartinTesis:2017} can be found a proof in the generalized case, following the ideas in \cite{Silva-Ghrist:2007}.

In Section \ref{Section:filtered-simplicial-structures} we describe the generalized versions of standard Vietoris-Rips and \v{C}ech simplicial complex structures, to the case of disk systems with different radiuses. We explain how their respective filtrations are induced by weight functions, and we propose an algorithm to obtain the \v{C}ech-weight function of a given disk system, associating to each \v{C}ech simplex its corresponding \v{C}ech scale.

Section \ref{Section:intersection-properties} focuses on studying the intersection properties of collections of disks in the plane. We define a real-valuated function associated to each disk system in the plane, such that, if it turns out to be nonnegative, then its \v{C}ech scale agrees with its Vietoris-Rips scale, being then easy to compute; otherwise, the \v{C}ech scale will correspond to a root of such function, and  we propose a numerical approach to obtain this \v{C}ech scale (Section \ref{Section:Cech.scale-algorithm}), supported on the generalized Vietoris-Rips Lemma which provides appropriated bounds.

Section \ref{Section:Cech.scale-algorithm} contains our main result, the \texttt{Cech.scale} algorithm, whose input is a disk system in the plane, and the output is the corresponding \v{C}ech scale, as well as the unique intersection point of the rescaled disk system at its \v{C}ech scale (see Lemma \ref{Lemma:unique-point}).

Finally, we conclude the paper illustrating two applications in Section \ref{Section:applications}. First, we recall the miniball problem, to show how our \texttt{Cech.scale} algorithm yields the minimal enclosing ball for a finite point cloud in the plane. Secondly, we present an algorithm for computing the \v{C}ech filtration of the 2-skeleton of the generalized \v{C}ech complex structure for a $d$-dimensional disk systems in an arbitrary euclidean space $\mathbb{R}^d$.

\section{Vietoris-Rips and \v{C}ech systems} \label{Section:Rips.Cech.systems}

Throughout this paper, a finite collection of closed $d$-disks in the euclidean space $\mathbb{R}^d$, with positive radius,
\begin{equation}\label{generalDisksSystem}
M=\{D_i(c_i;r_i) \subset \mathbb{R}^d \mid r_i >0, 1\leq i \leq m \}
\end{equation}
will be called \textit{$d$-disk system}, or simply \textit{disk system} when there is no risk of confusion. In this section we introduce and analyze two fundamental subclasses of disk systems, namely, the Vietoris-Rips systems and the \v{C}ech systems. We study the infimum of those scales that turn a disk system into a Vietoris-Rips or \v{C}ech system.
We conclude this section presenting a generalized version of the Vietoris-Rips Lemma, extended to disk systems.
\begin{definition} Let $M=\{D_1,D_2,\ldots , D_m \}$ be a disk system. We say $M$ is a \textit{Vietoris-Rips system} if $D_i\cap D_j \neq \emptyset$ for each pair $i,j\in \{1,2,\ldots, m\}$. 
Moreover, if the disk system $M$ has the nonempty intersection property $\bigcap_{D_i \in M} D_i \neq \emptyset$, then $M$ is called a \textit{\v{C}ech system}.
\end{definition}

For each $\lambda \geq 0$, and disk system $M$ as in (\ref{generalDisksSystem}) we define the collection 
\[ M_\lambda := \{D_i(c_i;\lambda r_i) \subset \mathbb{R}^d \mid D_i \in M\} \]
and say that $\lambda$ is a \textit{scale}.
Geometrically, the set $M_\lambda$ consists of disks with the same centers than those in $M$, but with rescaled radii by $\lambda$. Clearly, only when $\lambda > 0$ the set $M_\lambda$ will be again a disk system. Note that $M_1=M$, and $M_0$ is the set consisting of the centers of the disks in $M$. 

\begin{definition} \label{Definition:VR-Cech-scales}
Let $M$ be a disk system. The Vietoris-Rips scale of $M$ is defined by,
\[ \nu_M := \inf \{ \lambda \in \mathbb{R} \mid M_\lambda \mbox{ is a Vietoris-Rips system} \}.\]
Analogously, the \v{C}ech scale of $M$ is defined by,
\[ \mu_M := \inf \{ \lambda \in \mathbb{R} \mid M_\lambda \mbox{ is a \v{C}ech system} \}.\]
\end{definition}

Let $\mu_M$ be the \v{C}ech scale of the disk system $M$, then we have that $\bigcap_{D_i \in M} D_i(c_i; \mu_M r_i) \neq \emptyset$. Essentialy, this is a consequence of the completeness of the euclidean space, the fact that $\bigcap_{D_i \in M} D_i(c_i; \lambda r_i) \subset \mathbb{R}^d$ is a closed subset for every scale $\lambda$ in the set $\{ \lambda \in \mathbb{R} \mid M_\lambda \mbox{ is a \v{C}ech system} \}$ and  $\bigcap_{D_i \in M} D_i(c_i; \lambda' r_i) \subset \bigcap_{D_i \in M} D_i(c_i; \lambda r_i)$ for $\lambda' < \lambda$.

A straightforward calculation shows the following characterizations: $M$ is a Vietoris-Rips system if and only if, $\nu_M \leq 1$ (in particular, $\nu_{M_{\nu_M}} = 1$); similarly, $M$ is a \v{C}ech system if and only if, $\mu_M \leq 1$.

Note that it is not hard to calculate the Vietoris-Rips scale $\nu_M$ for a given disk system $M=\{D_1,D_2,\ldots , D_m \}$: if $r_i$ denotes the radius of $D_i$, and $\Vert c_i - c_j \Vert$ represents the distance between the center of $D_i$ and $D_j$, then $\nu_M = \max_{i<j}\{\Vert c_i - c_j \Vert/(r_i + r_j)\}$.

For a disk system $M$ with just one disk, its Vietoris-Rips scale is $\nu_M = 0$; if $M = \{D_1, D_2\}$ has two disks, then $\nu_M = \Vert c_1 - c_2 \Vert/(r_1 + r_2)$. Actually, in both cases the Vietoris-Rips scale agrees with the \v{C}ech scale.

On the other hand, calculating the \v{C}ech scale is a more complicated issue if the disk system has at least three disks. Concerning to \v{C}ech scales, we have the following lemma, which will become important for our implementations.

\begin{lemma} \label{Lemma:unique-point}
Let $\mu \geq 0$ be a scale and let $M$ be a disk system. Then $\mu$ is the \v{C}ech scale of $M$ if and only if, the $\mu$-rescaled system $M_{\mu}$ has only one intersection point, i.e. the set $\bigcap_{D_i \in M} D_i(c_i; \mu r_i)$ is unitary.
\end{lemma}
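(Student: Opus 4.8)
The plan is to lean on two facts already recorded above: the family of intersections $\bigcap_{D_i \in M} D_i(c_i; \lambda r_i)$ is nested and nondecreasing in $\lambda$, and at the \v{C}ech scale itself the set $\bigcap_{D_i \in M} D_i(c_i; \mu_M r_i)$ is nonempty (hence, being closed, contains genuine points). The geometric engine driving both implications is the strict convexity of euclidean balls, which I will make quantitative through the median identity
\[
\left\| \frac{p+q}{2} - c \right\|^2 = \frac{1}{2}\|p - c\|^2 + \frac{1}{2}\|q - c\|^2 - \frac{1}{4}\|p - q\|^2 .
\]

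For the forward implication I would assume $\mu = \mu_M$ and suppose, toward a contradiction, that the (nonempty) intersection contains two distinct points $p \neq q$. Applying the identity above with $c = c_i$ for each index, together with $\|p - c_i\| \le \mu_M r_i$ and $\|q - c_i\| \le \mu_M r_i$, yields $\|m - c_i\| < \mu_M r_i$ strictly for the midpoint $m = (p+q)/2$ and every $i$. Since the system is finite, $\lambda_0 := \max_i \|m - c_i\|/r_i < \mu_M$, and $m \in \bigcap_i D_i(c_i; \lambda_0 r_i)$, so $M_{\lambda_0}$ is already a \v{C}ech system at a scale strictly below $\mu_M$, contradicting the definition of $\mu_M$ as an infimum. (The degenerate case $\mu_M = 0$ forces all centers to coincide, whence the intersection is a single point directly.) Hence the intersection at $\mu_M$ is a singleton.

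For the converse I would assume $\bigcap_i D_i(c_i; \mu r_i) = \{x_0\}$. Nonemptiness shows $M_\mu$ is a \v{C}ech system, so $\mu \ge \mu_M$ at once. To exclude $\mu > \mu_M$, I pick any point $y_0$ in the nonempty intersection at scale $\mu_M$; by monotonicity $y_0$ lies in the intersection at scale $\mu$ as well, forcing $y_0 = x_0$. Then $\|x_0 - c_i\| \le \mu_M r_i < \mu r_i$ for every $i$, so $x_0$ is an interior point of each disk $D_i(c_i; \mu r_i)$; intersecting the finitely many open interiors produces an open neighborhood of $x_0$ contained in $\bigcap_i D_i(c_i; \mu r_i)$, contradicting that this set equals $\{x_0\}$. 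Therefore $\mu = \mu_M$.

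The step I expect to be the crux — and the only place where genuine geometry enters — is the strict midpoint estimate in the forward direction: it is precisely the strict convexity of the euclidean norm that upgrades ``the midpoint lies in each disk'' to ``the midpoint lies strictly inside each disk,'' and this strictness, combined with finiteness of the system, is what manufactures a scale strictly below $\mu_M$. Everything else amounts to bookkeeping with the infimum definition and the nestedness already established in the text.
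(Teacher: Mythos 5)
Your proof is correct and follows essentially the same route as the paper's: the forward direction rests on the strict convexity of euclidean balls applied to the midpoint of two putative intersection points (your median identity is just a cleaner, quantitative form of the paper's strict inequality $\Vert \bar{p} - c_i \Vert < \max\{\Vert p_1 - c_i\Vert, \Vert p_2 - c_i\Vert\}$), and the converse rests on the observation that a singleton intersection must meet some disk's boundary, since otherwise finitely many open interiors would yield a whole neighborhood inside the intersection. The only cosmetic difference is that in the converse you invoke the previously established nonemptiness of $\bigcap_{D_i \in M} D_i(c_i;\mu_M r_i)$ to identify the unique point and contradict interiority, whereas the paper verifies $\mu = \inf S$ directly by checking that $\mu$ is a lower bound for $S$ and that $\mu + \varepsilon \in S$ for all $\varepsilon > 0$.
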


Such point in $\bigcap_{D_i \in M} D_i(c_i; \mu_M r_i)$ will be denoted by $c_M$.

\begin{proof} 
The case $\mu = 0$ happens only when the disk system consists of a single disk or is a collection of concentric disks. In this case, the claim of the lemma is evident.

Let $\mu > 0$ be the \v{C}ech scale of $M$ and suppose there exist a couple of points $p_1, p_2 \in \bigcap_{D_i \in M} D_i(c_i; \mu r_i)$ such that $p_1 \neq p_2$. By convexity of the disks, it follows that the middle point $\bar{p} = \frac{1}{2}(p_1 + p_2)$ must belong to every disk $D_i(c_i; \mu r_i)$. On the other hand, $\Vert \bar{p} - c_i \Vert < \max \{ \Vert p_1 - c_i \Vert, \Vert p_2 - c_i \Vert  \}$ for any center $c_i$ in the disk system. Let $\mu_i < \mu$ be a scale such that $\bar{p} \in D_i(c_i; \mu_i r_i)$ for every disk in $M$. It follows that $\bar{\mu} = \max \{\mu_i\} < \mu$ and $\bar{p} \in \bigcap_{D_i \in M} D_i(c_i; \bar{\mu}r_i)$ which contradicts the minimality of the \v{C}ech scale $\mu$. Therefore, the set $\bigcap_{D_i \in M} D_i(c_i; \mu r_i)$ is unitary.

Now, suppose $\bigcap_{D_i \in M} D_i(c_i; \mu r_i)$ is unitary and consider the set $S=\{ \lambda \in \mathbb{R} \mid M_\lambda \mbox{ is a \v{C}ech system} \}$. 

If $\bigcap_{D_i \in M} D_i(c_i; \mu r_i) = \{p\}$, then $p \in \partial D_{i_0}(c_{i_0}; \mu r_{i_0})$ for some $D_{i_0} \in M$, because otherwise there would exist a neighborhood of $p$ entirely contained in $\bigcap_{D_i \in M} D_i(c_i; \mu r_i)$.

The fact that $\mu = \inf S$ is a consequence of the following:
\begin{itemize}
\item Let $\lambda \in S$ be a scale such that $\bigcap_{D_i \in M} D_i(c_i; \lambda r_i) \neq \emptyset$. \\
If $ \lambda < \mu$, then $D_i(c_i; \lambda r_i) \subset D_i(c_i; \mu r_i)$ for any $D_i \in M$, and $p \not\in D_{i_0}(c_{i_0}; \lambda r_{i_0})$; thus, $\bigcap_{D_i \in M} D_i(c_i; \lambda r_i) \subset \bigcap_{D_i \in M} D_i(c_i; \mu r_i) = \{p\}$ and $p \not\in \bigcap_{D_i \in M} D_i(c_i; \lambda r_i)$, therefore $\bigcap_{D_i \in M} D_i(c_i; \lambda r_i) = \emptyset$. Then $\lambda \geq \mu$, for any $\lambda \in S$.
\item For every $\varepsilon > 0$, we have $\{p\} \in D_i(c_i; \mu r_i) \subset D_i(c_i; (\mu + \varepsilon)r_i)$. Then $\bigcap_{D_i \in M} D_i(c_i; (\mu + \varepsilon)r_i) \neq \emptyset$, i.e. $\mu + \varepsilon \in S$.
\end{itemize}
\end{proof}

\begin{example}\label{Example:VR-Cech-system}\normalfont
Figure \ref{Figure:Vietoris-Rips-system} shows (left picture) the following 2-disk system in $\mathbb R^2$,
\[ M= \{D_1((-3,4);4), D_2((1,3);3), D_3((2,-1);2) \}. \]

\begin{figure*}[hbt]
\begin{multicols}{3}
\includegraphics[width=0.33\textwidth]{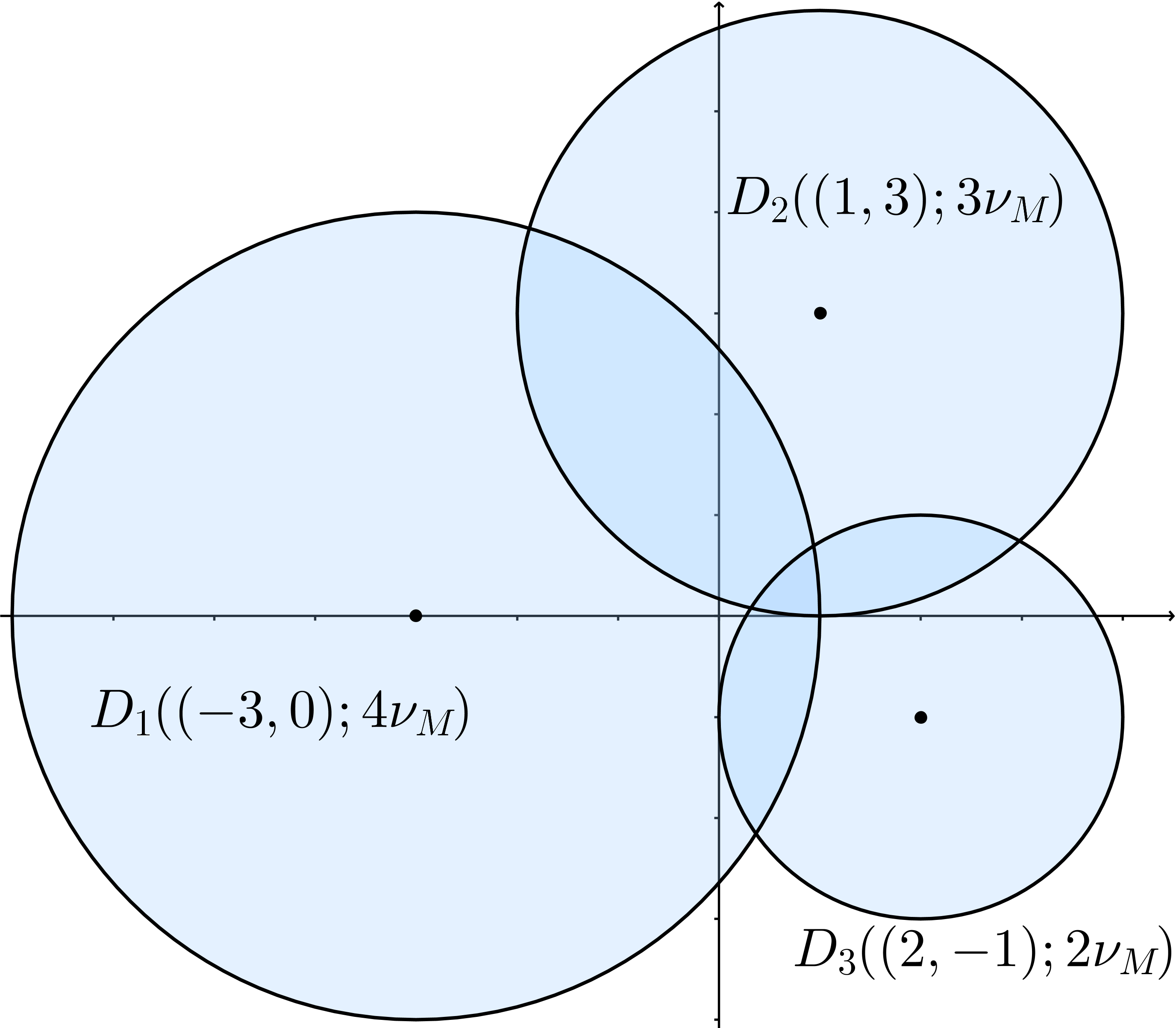}

\includegraphics[width=0.33\textwidth]{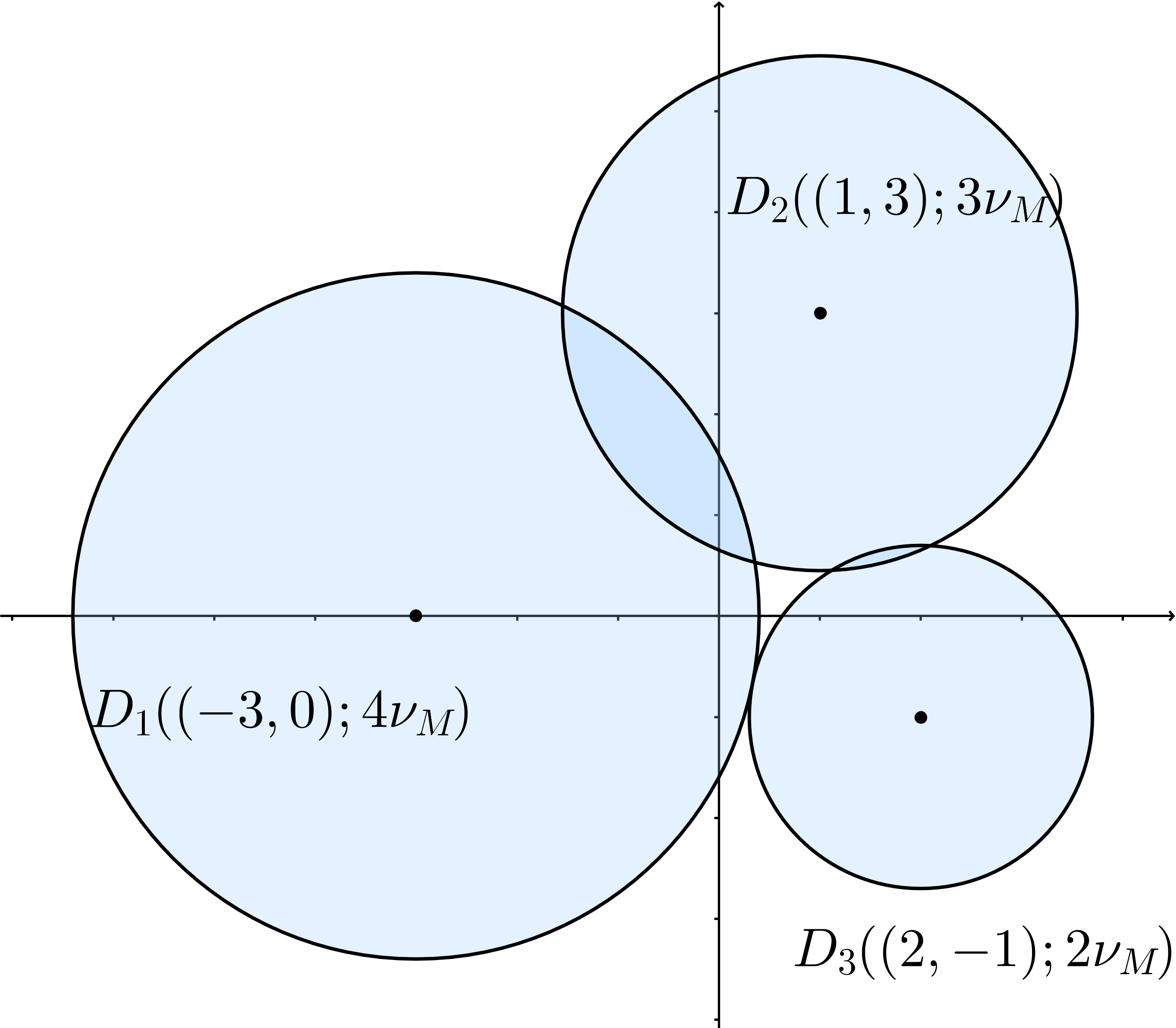}

\includegraphics[width=0.33\textwidth]{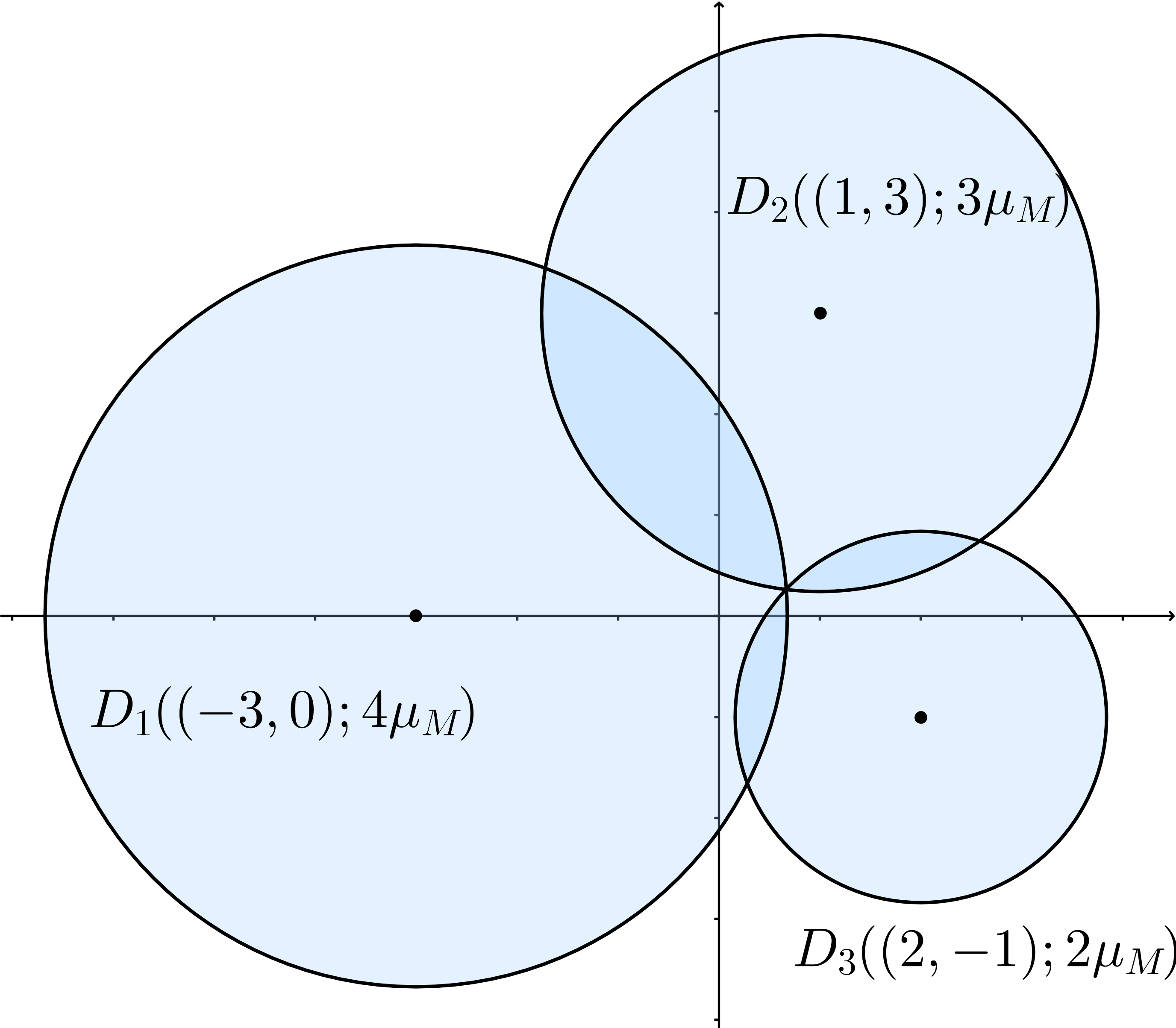}
\end{multicols}
\caption{Vietoris-Rips and \v{C}ech systems in the plane.}
\label{Figure:Vietoris-Rips-system}
\end{figure*}

This 2-disk system is a Vietoris-Rips system and also a \v{C}ech system. In this case, we have $\nu_M = \sqrt{26}/6 \approx 0.8498$, and actually, the $\nu_M$-rescaled 2-disk system $M_{\nu_M}$ (center picture), has an empty intersection, i.e. $\bigcap_{D_i \in M} D_i(c_i; \nu_M\, r_i) = \emptyset$, so it corresponds to a Vietoris Rips system which is not a \v{C}ech system. For such 2-disk system, the \texttt{Cech.scale} algorithm in Section \ref{Section:Cech.scale-algorithm} yields to $\mu_M = 0.9188$ and the \v{C}ech system $M_{\mu_M}$ is shown in the right picture.
\end{example}

There exists a close relationship between Vietoris-Rips systems and \v{C}ech systems. Obviously, every \v{C}ech system is also a Vietoris-Rips system, but the opposite statement does not hold  in general (as we saw in the example above). However, we have the following result which extends the standard Vietoris-Rips Lemma \cite[Th. 2.5]{Silva-Ghrist:2007}, to any Vietoris-Rips system (compare with \cite[Th. 3.2]{MartinTesis:2017}). 

The standard Vietoris-Rips Lemma is established in the context of a disk system in which all radii are equal, and corresponds to a reformulation of the well-known Jung's Lemma \cite{Jung:1901}. On the other hand, for a disk system in general, the Vietoris-Rips Lemma is also valid, although it does not follow directly from Jung's Lemma. Next, we propose a proof of the Vietoris-Rips Lemma, for an arbitrary disk system, using elementary geometrical arguments.

\begin{lemma} \label{Lemma:VR-generalized}
Let $M = \{D_i(c_i; r_i) \subset \mathbb{R}^d \mid r_i >0 \}$ be a finite set of closed disks in $\mathbb{R}^d$. If $D_i(c_i; r_i) \cap D_j(c_j; r_j) \neq \emptyset$ for every pair of disks in $M$, then 
\[ \bigcap_{D_i \in M} D_i(c_i; \sqrt{2d/(d+1)}\, r_i) \neq \emptyset. \]
\end{lemma}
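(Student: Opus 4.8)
The plan is to recast the statement as a bound on the \v{C}ech scale $\mu_M$ of $M$, viewed through the weighted smallest-enclosing-ball function $F(p) = \max_i \|p - c_i\|/r_i$. Since $M_\lambda$ is a \v{C}ech system exactly when some $p$ satisfies $\|p-c_i\| \le \lambda r_i$ for all $i$, we have $\mu_M = \min_p F(p)$, and this minimum is attained at some $p^*$ by continuity and coercivity of $F$. Writing $\rho = \mu_M = F(p^*)$, the pairwise hypothesis reads $\|c_i - c_j\| \le r_i + r_j$, and the goal becomes the purely quantitative inequality $\rho \le \sqrt{2d/(d+1)}$; by Lemma \ref{Lemma:unique-point} the point $p^*$ is precisely the unique point $c_M$ of $\bigcap_{D_i\in M}D_i(c_i;\rho r_i)$.

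Next I would extract the optimality structure at $p^*$. Let $I = \{i : \|p^*-c_i\| = \rho r_i\}$ be the active indices. A separating-hyperplane argument shows $p^* \in \mathrm{conv}\{c_i : i\in I\}$: if not, one could translate $p^*$ slightly in a direction making an acute angle with every $c_i - p^*$, strictly decreasing all active distances while keeping the inactive disks inactive, contradicting the minimality of $\rho$. Hence $p^* = \sum_{i\in I} w_i c_i$ with $w_i \ge 0$, $\sum_i w_i = 1$, and by Carath\'eodory's theorem I may assume the support has size $n := |I| \le d+1$ with all $w_i > 0$. Combining the active equalities $\|p^*-c_i\|^2 = \rho^2 r_i^2$ with the variance identity $\sum_i w_i\|p^*-c_i\|^2 = \tfrac12\sum_{i,j}w_iw_j\|c_i-c_j\|^2$ (valid because $p^*$ is the $w$-barycenter) gives the exact relation
\[ \rho^2\sum_i w_i r_i^2 = \tfrac12\sum_{i,j}w_iw_j\|c_i-c_j\|^2. \]

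Now I would invoke the pairwise hypothesis, but crucially only on the off-diagonal terms, since $\|c_i-c_i\| = 0$ while $(r_i+r_i)^2 > 0$; discarding this slack is exactly what separates the sharp constant from the naive bound $\sqrt2$. Setting $A = \sum_i w_i r_i^2$ and $B = \sum_i w_i r_i$, the substitution $\|c_i-c_j\|^2 \le (r_i+r_j)^2$ for $i\ne j$ collapses to $\rho^2 A \le A + B^2 - 2\sum_i w_i^2 r_i^2$, so $\rho^2 \le 1 + (B^2 - 2\sum_i w_i^2 r_i^2)/A$. The heart of the proof --- and the step I expect to be the main obstacle --- is then the scalar inequality $B^2 - 2\sum_i w_i^2 r_i^2 \le \tfrac{n-2}{n}A$. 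Viewing its negation as a quadratic form in $(r_1,\dots,r_n)$ with matrix $\mathrm{diag}\!\big(\tfrac{n-2}{n}w_i + 2w_i^2\big) - ww^{\top}$, positivity reduces via the Schur complement to $\sum_i \tfrac{w_i}{(n-2)/n + 2w_i} \le 1$, which follows from Jensen's inequality applied to the concave map $t \mapsto t/((n-2)/n + 2t)$ at the mean $w_i = 1/n$.

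Assembling these gives $\rho^2 \le 1 + \tfrac{n-2}{n} = \tfrac{2(n-1)}{n}$, and since $n \le d+1$ and $t\mapsto 2(t-1)/t$ is increasing, $\rho^2 \le \tfrac{2d}{d+1}$, which is the claim. I would close by noting that every inequality is tight for the regular $d$-simplex of mutually tangent equal disks (uniform weights $w_i = 1/(d+1)$), confirming that the constant $\sqrt{2d/(d+1)}$ cannot be improved and explaining why the bound $n \le d+1$ on the number of active disks is the decisive ingredient.
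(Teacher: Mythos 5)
Your argument is correct, and it reaches the sharp constant by a genuinely different route from the paper. Both proofs share the same opening move: the minimizer $p^*$ of $\max_i \Vert p - c_i\Vert / r_i$ (the paper's $c_M$) lies in the convex hull of the active centers, by the same separating-hyperplane perturbation. After that the proofs diverge. The paper first reduces to at most $d+1$ disks via Helly's theorem, then uses the convex representation of $0 = \sum a_j \hat c_j$ only to extract a \emph{single} pair of active centers subtending an angle with $\cos\theta_{12} \le -1/(m-1) \le -1/d$ at $c_M$, and finishes with AM--GM and the law of cosines applied to that one pair together with the single hypothesis $\Vert c_1 - c_2\Vert \le r_1 + r_2$. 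You instead keep \emph{all} active pairs, feed them into the barycentric variance identity $\rho^2 \sum_i w_i r_i^2 = \tfrac12 \sum_{i,j} w_i w_j \Vert c_i - c_j\Vert^2$, and close with the weighted quadratic-form inequality (Schur complement, then Jensen on the concave map $t \mapsto t/(\tfrac{n-2}{n} + 2t)$); I checked the Cauchy--Schwarz reduction to $\sum_i w_i/(\tfrac{n-2}{n}+2w_i) \le 1$ and the edge cases $n=1,2$, and they all hold. Your route buys two things: it dispenses with Helly entirely, since Carath\'eodory bounds the support of the convex representation by $d+1$ directly, and it makes the refined bound $\mu_M^2 \le 2(n-1)/n$ in the number $n$ of active disks explicit (the paper's argument yields the same refinement implicitly through $m$, but states only the $d$-dependent form). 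The paper's route is the more elementary one --- a single angle estimate and AM--GM, no quadratic forms --- and is closer in spirit to the classical Jung/Vietoris--Rips argument it generalizes. Both proofs identify the same extremal configuration of $d+1$ mutually tangent equal disks.
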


In other words, for every Vietoris-Rips system $M$ in $\mathbb R^d$, the $\sqrt{2d/(d+1)}$-rescaled disk system $M_{\sqrt{2d/(d+1)}}$ is a \v{C}ech system.

\begin{proof}
First, we prove the result for the case where $M$ has at most $d+1$ disks, say $M=\{D_1, \ldots, D_{d'}\}$, $d' \leq d+1$. Note that we need to prove that $\mu_M \leq \sqrt{\frac{2d}{d+1}}$.

Let $\{c_M\} = \bigcap_i D_i(c_i; \mu_M r_i)$ be the unique intersection point of the $\mu_M$-rescaled disk system (see Lemma \ref{Lemma:unique-point}). Without loss of generality, we assume that $\Vert c_i - c_M \Vert = \mu_M r_i$ for $c_1, \ldots, c_m$ and $m \leq d'$.

Then $c_M$ belongs to the convex hull of the set $\{c_1, \ldots, c_m\}$, for if this were not true, there would exist an hyperplane across $c_M$ such that the set $\{c_1, \ldots, c_m\}$ is completely contained at one side, let $v$ be the normal vector for such hyperplane in the opposite direction, then $\langle v, c_i - c_M \rangle > 0$ for all $i=1,\ldots,m$, hence
\[ \Vert c_i - c_M \Vert^2 	= \Vert c_i - (c_M + tv) \Vert^2 + 2t\langle v, c_i - c_M \rangle - t^2\Vert v \Vert^2 > \Vert c_i - (c_M + tv) \Vert^2 \]
for every $t \in I_v := (0, 2 \langle v, c_i - c_M \rangle / \Vert v \Vert^2)$; this implies that $c_M + tv \in D(c_i; \mu_M r_i)$ for any $i=1,\ldots,m$ and $t \in I_v$, which is a contradiction since $c_M$ is the only point in the intersection $\bigcap_i D(c_i; \mu_M r_i)$. Therefore, $c_M$ is in the convex hull.

Now, define $\hat{c}_i := c_i - c_M$ and let $\theta_{ij}$ denote the angle between vectors $\hat{c}_i$ and $\hat{c}_j$. Since $c_M$ is in the convex hull of $\{c_1,...,c_m\}$, then the vector $0\in \mathbb R^d$ can be written as a convex combination  $\sum_{j=1}^m a_j \hat{c}_j = 0$.
Thus $\langle \sum_{j=1}^m a_j \hat{c}_j, \hat{c}_i\rangle = 0$ for any $i=1,\ldots, m$, and
\[ \sum_{j=1}^m \langle a_j \hat{c}_j, \hat{c}_i \rangle = \sum_{j=1}^m a_j \Vert \hat{c}_j \Vert \, \Vert \hat{c}_i \Vert \cos \theta_{ji} = 0. \]
Taking out the common factor $\Vert \hat{c}_i \Vert$, we have $ \sum_{j=1}^m a_j \Vert \hat{c}_j \Vert\cos \theta_{ji} = 0$.
Now, taking the sum over $i$, we deduce that
\begin{equation} \label{Eq:convex-sum}
\sum_{i=1}^m \sum_{j=1}^m a_j \Vert \hat{c}_j \Vert\cos \theta_{ji} = \sum_{j=1}^m a_j \Vert \hat{c}_j \Vert \sum_{i=1}^m \cos \theta_{ji} = 0.
\end{equation}
Note that $\cos \theta_{ii} = 1$. On the other hand, if were $\cos \theta_{ji} > -\dfrac{1}{m-1}$ for all $1\leq i,j \leq m$, $i \neq j$, then, for each $j$ we should have $\sum_{i=1}^m \cos \theta_{ji} = 1 + \sum_{j=1, j\neq i}^m \cos \theta_{ji} > 0$. But, this contradicts (\ref{Eq:convex-sum}) because $\sum_{j=1}^m a_j = 1$ and $a_j \geq 0$. Therefore, there must exist $i\neq j$, say $i=1$ and $j=2$, such that $\cos \theta_{12} \leq -\dfrac{1}{m-1} \leq -\dfrac{1}{d}$, then
\begin{equation} \label{Eq:inequality-cos}
0 \leq \dfrac{d}{d-1}(1+\cos(\theta_{12})) \leq 1.
\end{equation}
It follows from inequality above and from the AM-GM inequality, that
\begin{equation} \label{MG-MA}
\sqrt{\dfrac{d}{d-1}(1+\cos(\theta_{12}))}\cdot \sqrt{r_1 r_2} \leq \sqrt{r_1 r_2} \leq \dfrac{r_1 + r_2}{2}.
\end{equation}
A straightforward calculation on (\ref{MG-MA}) leads us to the following inequality
\begin{equation*}
(d+1)(r_1 + r_2)^2 \leq 2d(r_1^2 + r_2^2 - 2r_1r_2\cos(\theta_{12}))
\end{equation*}
so,
 \begin{align*}
(d+1)\mu_M^2(r_1 + r_2)^2 	&\leq 2d(\mu_M^2 r_1^2 + \mu_M^2 r_2^2 - 2\mu_M^2 r_1 r_2 \cos(\theta_{12}))\\
							&= 2d(\Vert \hat{c}_1 \Vert^2 + \Vert \hat{c}_2 \Vert^2 - 2 \Vert \hat{c}_1 \Vert \Vert \hat{c}_2 \Vert \cos(\theta_{12}))\\
							&= 2d(\Vert \hat{c}_1 \Vert^2 + \Vert \hat{c}_2 \Vert^2 - 2 \langle \hat{c_1}, \hat{c_2} \rangle)\\
							&= 2d\Vert \hat{c}_1 - \hat{c}_2 \Vert^2
\end{align*} 
which implies, 
\[ \mu_M^2 \leq  \dfrac{2d}{d+1}\cdot \dfrac{\Vert \hat{c}_1 - \hat{c}_2 \Vert^2 }{(r_1 + r_2)^2} = \dfrac{2d}{d+1}\cdot \dfrac{\Vert c_1 - c_2 \Vert^2 }{(r_1 + r_2)^2} \leq \dfrac{2d}{d+1} \]
(the last inequality holds because $M$ is a Vietoris-Rips system) or equivalently $\mu_M \leq \sqrt{\frac{2d}{d+1}}$. 

For a collection with more than $d+1$ disks, the claim of the lemma is a consequence of the Helly's Theorem (see \cite[Problem 29]{Bollobas:2006}) which establishes that for any finite collection, with at least $d+1$ convex subsets of the $d$-dimensional euclidean space $\mathbb{R}^d$, if the intersection of every subcolection with $d+1$ of such sets is nonempty, then the whole collection has a nonempty intersection. This concludes the proof. 
\end{proof}

The upper bound $\sqrt{2d/(d+1)}$ in Lemma \ref{Lemma:VR-generalized} is optimal: it suffices to take a disk system with $d+1$ disks of equal radii and pairwise tangents (cf. \cite[Sec. III.2]{edelsbrunner:2010}).

In the Example \ref{Example:VR-Cech-system} we can see what the Vietoris-Rips Lemma claim for the 2-disk system $M$: $\mu_M = 0.9188 < 0.9812 = \sqrt{4/3}\nu_M$.

To conclude this section, notice that as for a \v{C}ech system $\mu_M\leq 1$, then the Vietoris-Rips Lemma implies the following result.

\begin{corollary}\label{Corollary:Cech-system}
If $M$ is an arbitrary $d$-disk system and $\nu_M$ is its Vietoris-Rips scale, then its \v{C}ech scale satisfies that $\mu_M \in [\nu_M, \sqrt{2d/(d+1)}\, \nu_M]$. In consequence, for every $d$-disk system $M$, the rescaled disk system $M_{\sqrt{2d/(d+1)}\, \nu_M}$ is always a \v{C}ech system.

In particular, if $\sqrt{2d/(d+1)}\, \nu_M \leq 1$ then $M_{\nu_M}$ is a \v{C}ech system. 
\end{corollary}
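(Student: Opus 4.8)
The plan is to establish the two inequalities defining the interval separately, and then read off both consequences from the construction used for the upper bound. For the lower bound $\nu_M \leq \mu_M$, I would simply recall that every \v{C}ech system is a Vietoris-Rips system, so that for any scale $\lambda$ the implication ``$M_\lambda$ is a \v{C}ech system $\Rightarrow M_\lambda$ is a Vietoris-Rips system'' holds. Consequently the set $\{\lambda \mid M_\lambda \text{ is a \v{C}ech system}\}$ is contained in $\{\lambda \mid M_\lambda \text{ is a Vietoris-Rips system}\}$, and passing to the infimum over the smaller set can only raise its value; by Definition \ref{Definition:VR-Cech-scales} this is exactly $\mu_M \geq \nu_M$.

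For the upper bound the key observation is that rescaling composes multiplicatively: since the radii transform by $\kappa(\lambda r_i) = (\kappa\lambda)r_i$, one has $(M_\lambda)_\kappa = M_{\kappa\lambda}$. I would first note that $M_{\nu_M}$ is itself a Vietoris-Rips system, which is precisely the identity $\nu_{M_{\nu_M}} = 1$ recorded just after Definition \ref{Definition:VR-Cech-scales} (the infimum defining $\nu_M$ is attained, being a maximum of finitely many ratios $\Vert c_i-c_j\Vert/(r_i+r_j)$). Applying the generalized Vietoris-Rips Lemma (Lemma \ref{Lemma:VR-generalized}) to the Vietoris-Rips system $M_{\nu_M}$ then shows that $(M_{\nu_M})_{\sqrt{2d/(d+1)}}$ is a \v{C}ech system, and by multiplicativity this system is exactly $M_{\sqrt{2d/(d+1)}\,\nu_M}$. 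This already proves the ``in consequence'' assertion verbatim; moreover, exhibiting a scale at which $M$ becomes a \v{C}ech system gives, again by Definition \ref{Definition:VR-Cech-scales}, the bound $\mu_M \leq \sqrt{2d/(d+1)}\,\nu_M$. Combining the two inequalities yields $\mu_M \in [\nu_M, \sqrt{2d/(d+1)}\,\nu_M]$.

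For the last clause I would invoke monotonicity of the \v{C}ech property under enlargement of radii: if $M_\lambda$ is a \v{C}ech system and $\lambda' \geq \lambda$, then $D_i(c_i;\lambda r_i) \subseteq D_i(c_i;\lambda' r_i)$ for every $i$, so the already nonempty total intersection persists and $M_{\lambda'}$ is again a \v{C}ech system; hence the set of admissible \v{C}ech scales is the upward-closed ray $[\mu_M,\infty)$. Under the hypothesis $\sqrt{2d/(d+1)}\,\nu_M \leq 1$, the scale $1$ then lies at or above $\mu_M \leq \sqrt{2d/(d+1)}\,\nu_M$, so $\mu_M \leq 1$ and $M = M_1$ is a \v{C}ech system (the hypothesis certifies $M$ itself, sitting at the scale $1 \geq \sqrt{2d/(d+1)}\,\nu_M$, as \v{C}ech).

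I do not anticipate a substantial obstacle, since essentially all the geometric content has already been discharged by Lemma \ref{Lemma:VR-generalized}. The only points requiring care are purely formal: the bookkeeping of the composition of scales $(M_\lambda)_\kappa = M_{\kappa\lambda}$, and the observation that the infimum defining $\nu_M$ is attained, so that $M_{\nu_M}$ is a genuine Vietoris-Rips system to which the Lemma may legitimately be applied.
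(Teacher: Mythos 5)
Your argument is correct and coincides with the paper's, which states this corollary as an immediate consequence of Lemma \ref{Lemma:VR-generalized} without a separate proof; the two formal points you isolate (the composition rule $(M_\lambda)_\kappa = M_{\kappa\lambda}$ and the fact that the infimum defining $\nu_M$ is attained, so that $M_{\nu_M}$ really is a Vietoris--Rips system) are exactly the right bookkeeping. One caveat: in the final clause you prove that $M = M_1$ is a \v{C}ech system, whereas the printed statement asserts that $M_{\nu_M}$ is; the printed version is false as literally written (in Example \ref{Example:VR-Cech-system} one has $\sqrt{4/3}\,\nu_M \approx 0.9812 \leq 1$, yet $M_{\nu_M}$ has empty intersection), so your reading is the intended and correct one.
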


\section{Filtered generalized simplicial structures for disk systems} \label{Section:filtered-simplicial-structures}

In this section we introduce two simplicial structures associated with a disk system $M$, as well as the filtration induced by rescaling the system $M$. The importance of these notions lies in their relation to the topological analysis through persistent homology of filtered simplicial structures, induced by point clouds with non-homogeneous neighborhoods.

Let $M$ be a disk system. Denote by $\mathcal{VR}(M)$ the family of all Vietoris-Rips subsystems of $M$, this is,
\[ \mathcal{VR}(M) = \{ \sigma \subset M \mid D_i \cap D_j \neq \emptyset, D_i, D_j \in \sigma \}. \]
Analogously, denote by $\mathscr{C}(M)$ the set of all \v{C}ech subsystems,
\[ \mathscr{C}(M) = \{ \sigma \subset M \mid \cap_{D_i\in \sigma} D_i \neq \emptyset\}. \]
 
On the other hand, recall that a simplicial structure on a (finite) set $V$ is defined as a family $\Delta(V) \subset 2^V$ of subsets of $V$ such that: if $\sigma \in \Delta(V)$ and $\tau \subset \sigma$, then $\tau \in \Delta(V)$. Is immediate that for any $\sigma \in \mathcal{VR}(M)$, every disk subsystem $\tau \subset \sigma$ is also in the family $\tau \in \mathcal{VR}(M)$. The same property is valid for the family $\mathscr{C}(M)$. These properties imply that $\mathcal{VR}(M)$ and $\mathscr{C}(M)$ are simplicial complexes. 

We refer to $\mathcal{VR}(M)$ as the generalized Vietoris-Rips complex associated to the disk system $M$, and to $\mathscr{C}(M)$ as the generalized \v{C}ech complex of $M$.

The above construction allows us to study the topology of a data cloud through the persistent homology of the generalized Vietoris-Rips or \v{C}ech complexes, i.e. to perform a topological data analysis. However, to perform such analysis it is necessary to construct a filtered simplicial structure. We will define a filtration through weight functions.

Let $\Delta$ be a simplicial complex and let $\omega : \Delta \to \mathbb{R}$ be a function. We call $\omega$ a \textit{weight function over the simplicial complex $\Delta$} if $\tau, \sigma \in \Delta$ and $\tau \subset \sigma$, implies $\omega(\tau) \leq \omega(\sigma)$.

For example, to the generalized \v{C}ech complex $\mathscr{C}(M)$ of the disk system $M$, the function $\omega : \mathscr{C}(M) \to \mathbb{R}, \sigma \mapsto \mu_\sigma$ which assigns the \v{C}ech scale to any \v{C}ech subsystem $\sigma \subset M$, is a weight function, \textit{called the \v{C}ech-weight function}. The analogous property holds for the Vietoris-Rips complex and the Vietoris-Rips scale (see \cite{MartinTesis:2017} for the construction of the filtered generalized \v{C}ech complex using weighted point clouds).

Moreover, from the definition we have for the \v{C}ech-weight function and to every nonnegative scale $\lambda \geq 0$, that 
\[\omega^{-1}((-\infty, \lambda]) = \mathscr{C}(M_\lambda).\]

We will denote by $\mathscr{C}_{M}(\lambda)$ the family $\mathscr{C}(M_{\lambda})$ for $\lambda \geq 0$, i.e. the family of all \v{C}ech subsystems of the $\lambda$-rescaled disk system $M_{\lambda}$, in order to make the dependence explicit with respect the the parameter. We establish the analogous definition for $\mathcal{VR}_{M}(\lambda)$, for any $\lambda \geq 0$.

It is important to note that there is no restriction on the scale $\lambda \geq 0$, additional to the non-negativity, i.e. we allow greater values of $\lambda$ than 1, in the interest of studying the generalized \v{C}ech complex of rescaled disk systems beyond the original.

For $\lambda' \leq \lambda$ we have the families contention: $\mathcal{VR}_{M}(\lambda') \subset \mathcal{VR}_{M}(\lambda)$ and $\mathscr{C}_{M}(\lambda') \subset \mathscr{C}_{M}(\lambda)$. In general, given a simplicial complex $\Delta$ and a weight function $\omega : \Delta \to \mathbb{R}$, any increasing sequence $\lambda_1 < \cdots < \lambda_s$ of real numbers induces a simplicial filtration: $\Delta_1 \subset \cdots \subset \Delta_s$ for $\Delta_i := \omega^{-1}((-\infty, \lambda_i])$. Thus, for any disk system $M$, the generalized \v{C}ech complex $\mathscr{C}(M)$ has a filtered simplicial complex structure,
\[ \mathscr{C}_M(0) \subset \mathscr{C}(\lambda_1) \subset \cdots \subset \mathscr{C}_M(\lambda_s). \]

Of course, when we vary the scale $\lambda$ on a interval the above filtration contains only a finite number of different sets. Moreover, those sets only change when the \v{C}ech scale of some disk system is reached and therefore it is enough to compute all sets corresponding to \v{C}ech scales to characterize entirely the filtration. The goal of the next sections is the construction of algorithms to numerically estimate the \v{C}ech scale of every \v{C}ech subsystem of $M$.

The filtered generalized \v{C}ech complex can be \textit{approximated} by the Vietoris-Rips structure, in the following sense. The inclusion $\mathscr{C}(M) \subset \mathcal{VR}(M)$ holds clearly, in consequence: $\mathscr{C}_{M}(\lambda) \subset \mathcal{VR}_{M}(\lambda)$ for any scale $\lambda \geq 0$, then by Lemma \ref{Lemma:VR-generalized} any Vietoris-Rips $d$-system $\sigma \in \mathcal{VR}(M)$ rescaled by a factor of $\sqrt{2d/(d+1)}$ is also a \v{C}ech $d$-system: $\sigma_{\sqrt{2d/(d+1)}} \in \mathscr{C}(M)$. Therefore, for any $d$-disk system $M$ the following relation is fulfilled:
\[ \mathcal{VR}_{M}(\lambda') \subset \mathscr{C}_{M}(\lambda) \subset \mathcal{VR}_{M}(\lambda), \]
where $\sqrt{2d/(d+1)} \cdot \lambda' \leq \lambda$.

To any disk system $M$, the simplicial substructure $\mathscr{C}(M)^{(1)}$ given by the 1-skeleton of the generalized \v{C}ech complex of $M$ is a basic combinatorial structure (actually, a graph) that can be easily defined, it just takes the relationship into account if every two vertices are neighbors: the set of vertices is $M$, and there exists an edge $\{D_i, D_j\}$ whenever $D_i \cap D_j \neq \emptyset$. The \v{C}ech-weight function restricted to $\mathscr{C}(M)^{(1)}$ is, in fact: $\omega(\{D_i\}) = 0$ to every vertice, and $\omega(\{D_i, D_j\}) = \Vert c_i - c_j \Vert / (r_i + r_j)$ to any edge.

In Algorithm \ref{Algorithm:Cech-weight-function} we calculate the \v{C}ech-weight function $\omega : \mathscr{C}_M(\lambda)^{(dim)} \to \mathbb{R}$, for the \textit{dim}-skeleton of a $\lambda$-rescaled disk system $M$. To do this, we assume an arbitrary linear order in the disk system $M$, and for every disk $D \in M$ we consider the following set:
\begin{center}
$\lambda$-\texttt{LowerNbrs}$(D) = \{\tilde{D} \in M \mid \tilde{D} < D,\ \omega(\{D, \tilde{D}\}) \leq \lambda\}$.
\end{center}

The following algorithm (based in \cite{Zomorodian2010FastCO}), is a standard expansion algorithm for simplicial complexes, and we are including the \v{C}ech-weight function value of each simplex when it is calculated.
\newpage

\begin{algorithm}\label{Algorithm:Cech-weight-function}
    \SetKwInOut{Input}{Input}
    \SetKwInOut{Output}{Output}

    \Input{A $d$-disk system $M$, a nonnegative parameter $\lambda$ and an integer $dim \geq 2$.}
    \Output{The \v{C}ech-weight function $\omega: \mathscr{C}_M(\lambda)^{(dim)} \to \mathbb{R}$.}
%    Calculate $\mathscr{C}_M(\lambda)^{(1)}$\;
	\For{$n \leftarrow 0${\normalfont\textbf{ to }}$dim-1$}{
    	\For{$\sigma \in \mathscr{C}_M(\lambda)^{(n)}\backslash \mathscr{C}_M(\lambda)^{(n-1)}$}{
    		$\mathrm{LN}_\sigma \leftarrow \bigcap_{D \in \sigma}$ $\lambda$-\texttt{LowerNbrs}$(D)$\;
    		\For{$D \in \mathrm{LN}_\sigma$}{
	    		$N \leftarrow \{D\}\cup\sigma$\;
    			\eIf{$n \leq d-1$}{
    				Calculate $\omega(N) \leftarrow \mu_N$\;
    			}{	$\omega(N) \leftarrow \max\{\omega(\tau) \mid \tau \varsubsetneq N \}$\;
    			}
    		\If{$\omega(N) \leq \lambda$}{
    			Update $\mathscr{C}_M(\lambda)^{(dim)} \leftarrow \mathscr{C}_M(\lambda)^{(dim)} \cup \{N\}$\;
    		}
    		}
    	}
    }
	\Return($\mathscr{C}_M(\lambda)^{(dim)}, \omega: \mathscr{C}_M(\lambda)^{(dim)} \to \mathbb{R}$)
    \caption{\v{C}ech-weight function of a $d$-disk system.}
\end{algorithm}

We conclude the section with an application of Algorithm \ref{Algorithm:Cech-weight-function} to a 2-disk system.

\begin{example}\normalfont
Let $M$ be the following 2-disk system,
\begin{align*}
M = \{ & D_1((2.99,0.56); 1.5), D_2((0.99,0.11); 1.0), D_3((1.69,1.30); 0.6),\\ 				& D_4((1.07,1.93); 0.4), D_5((1.96,2.64); 0.8) \}.
\end{align*}

The output of Algorithm \ref{Algorithm:Cech-weight-function} applied to $M$, with $d=2$, $\lambda=1$ and $dim=2$ gives the \v{C}ech scales indicated next to every edge and in the triangle, in Figure \ref{Figure:Filtered-Cech-system}. The \v{C}ech scale of the 2-disk system $\{D_1, D_2, D_3\}$ was calculated with the \texttt{Cech.scale} script from Algorithm \ref{Algorithm:Cech-scale-triplets}.
\begin{figure}[H]
\centering
\includegraphics[width=0.45\textwidth]{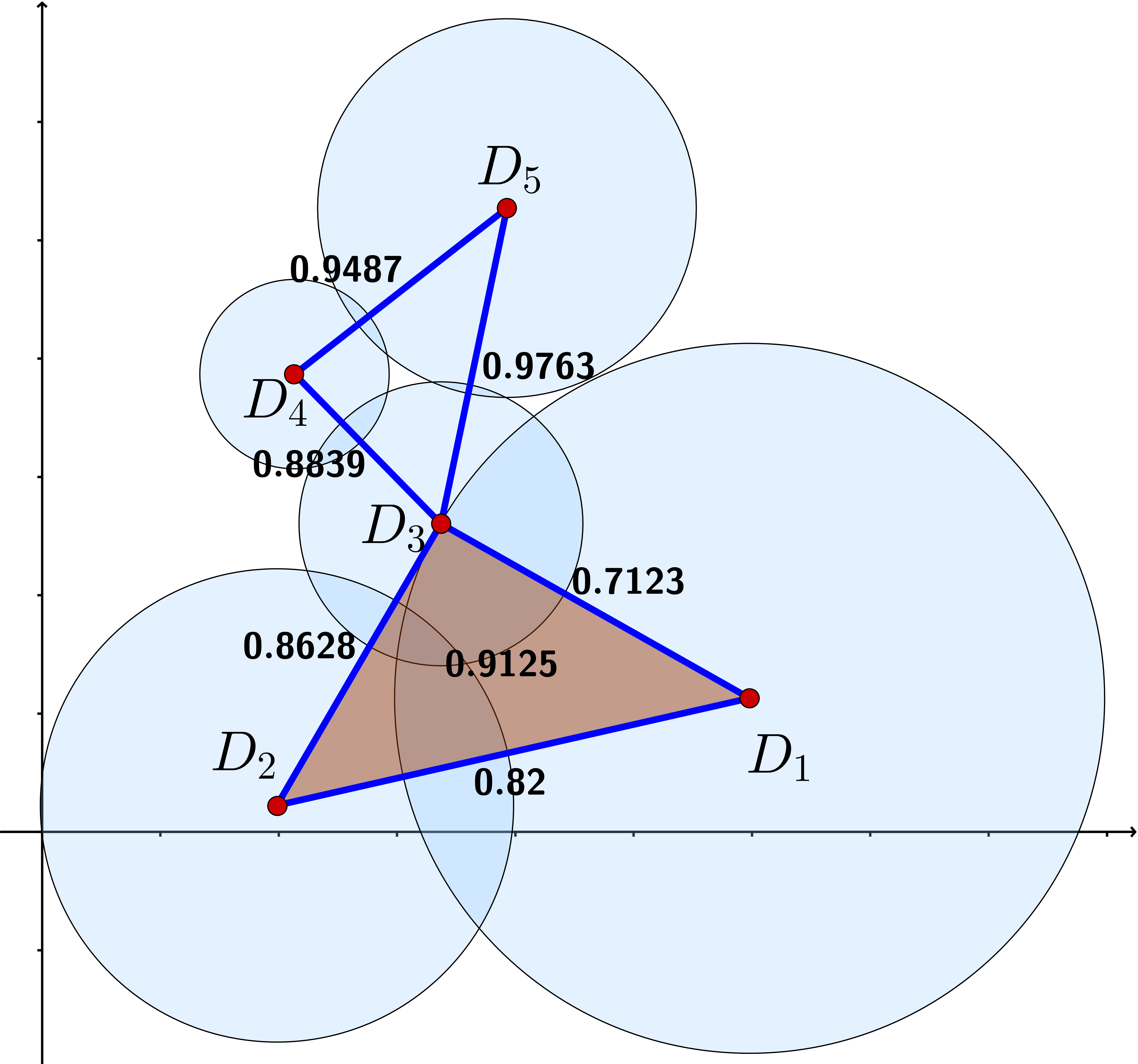}
\caption{The \v{C}ech-weight function of the 2-disk system $M$.}
\label{Figure:Filtered-Cech-system}
\end{figure}
\end{example}

\section{Intersection properties of disk systems} \label{Section:intersection-properties}
In this section we focus on studying disk systems in the plane, i.e. 2-\textit{disk systems}. 
As we have seen in the last section, the study of the \v{C}ech scale is a key aspect to the construction and study of filtered generalized \v{C}ech complex. In this section, we establish several intersection properties of 2-disk systems, which will lead us to be able to calculate the \v{C}ech scale.

Let $\partial D_i(c_i;r_i):= \{x \in \mathbb{R}^2 \mid \Vert x - c_i \Vert = r_i\}$ be the boundary of the closed 2-dimensional disk $D_i(c_i,r_i) \subset \mathbb{R}^2$.

Let $D_i$ and $D_j$ be two closed disks in the plane, such that $D_i \cap D_j \neq \emptyset$. We define $D_i \sqcap D_j$ to be the unitary set $\{d_{ij}\}$ constructed as follows:
\begin{enumerate}
\item If $\partial D_i \cap \partial D_j \neq \emptyset$, then $d_{ij} \in \partial D_i \cap \partial D_j$ is the only one point with the property $\langle d_{ij} - c_i, \mathbf{n}_{ij} \rangle \geq 0$, where $\mathbf{n}_{ij} = (-b, a)$ is the normal vector to $c_j - c_i = (a,b)$,
\item If $\partial D_i \cap \partial D_j = \emptyset$,  we define $d_{ij}$ as the unique intersection point in $\partial D_i(c_i;\lambda r_i) \cap \partial D_j(c_j;\lambda r_j)$, for $\lambda$ given as the minimal scale such that $D_i(c_i;\lambda r_i) \subset D_j(c_j;\lambda r_j)$ or $D_j(c_j;\lambda r_j) \subset D_i(c_i;\lambda r_i)$, i.e. $\lambda = \Vert c_i -c_j \Vert /|r_i - r_j|$.
\end{enumerate}

Clearly, if $\partial D_i \cap \partial D_j = \emptyset$, then $d_{ij} = d_{ji}$. In particular, when $D_i$ and $D_j$ are concentric, then $D_i \sqcap D_j = \{c_i\} = \{c_j\}$. On the other hand, if the closed disks $D_i$ and $D_j$ are internally or externally tangent, then $d_{ij} = d_{ji}$. 
We can think about $d_{ij}$, when $\partial D_i \cap \partial D_j$ is not empty, as the intersection point of the boundaries at the left of the vector from $c_i$ to $c_j$. Figure \ref{Figure:Intersection-points} shows the above construction.

\begin{figure}[hbt]
\centering
\includegraphics[width=0.7\textwidth]{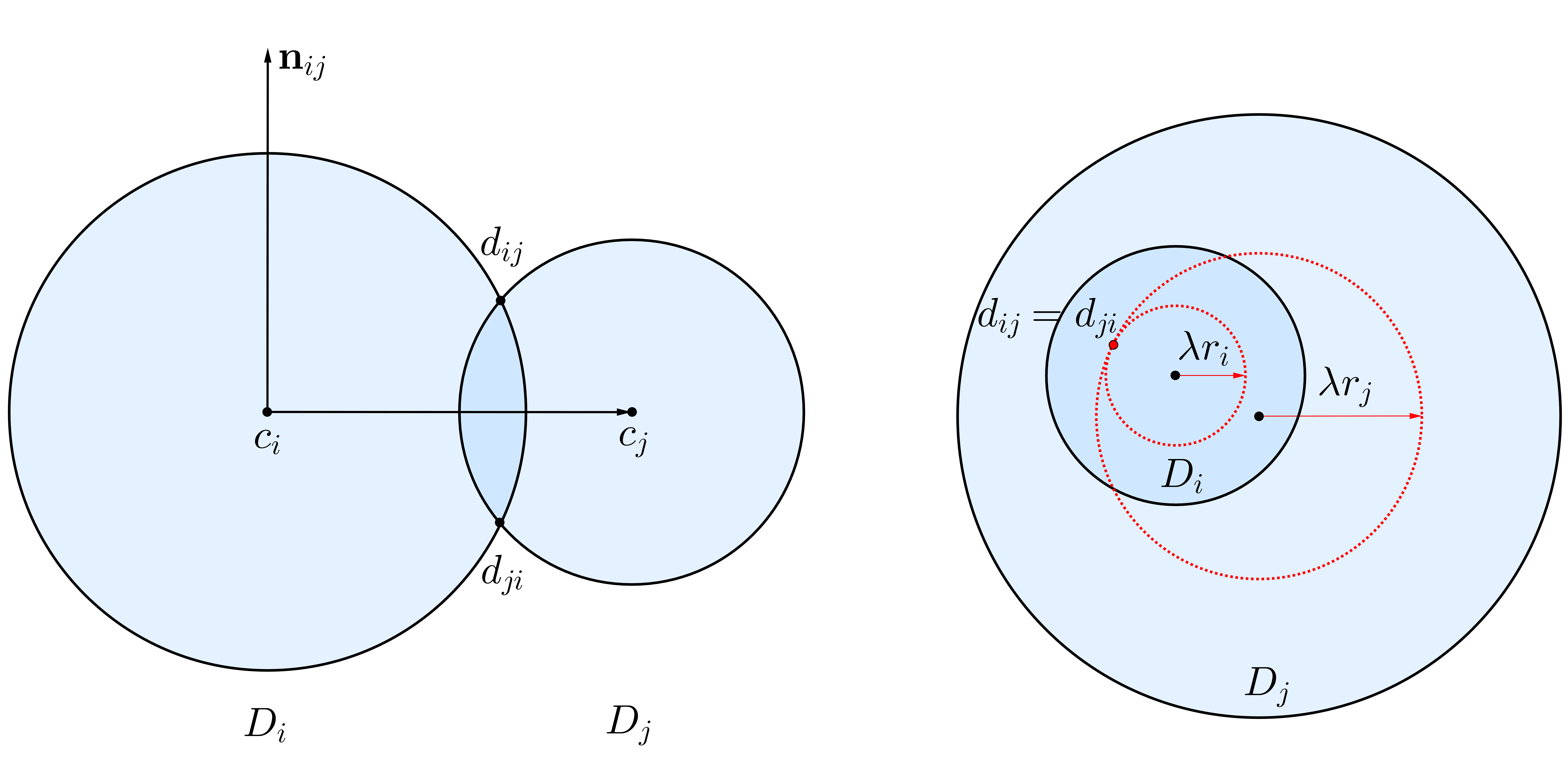}
\caption{Intersection point $d_{ij}$.}
\label{Figure:Intersection-points}
\end{figure}

We will denote by $d_{ij}(\lambda)$, instead of simply $d_{ij}$, for the intersection point of the $\lambda$-rescaled disks $D_i(c_i; \lambda r_i)$ and $D_j(c_j; \lambda r_j)$.

In order to study \v{C}ech systems, we give the following characterization, according the intersection points $d_{ij}$.

\begin{lemma}\label{Lemma:Cech-generalized}
Let $M = \{D_1, D_2, \ldots, D_m\}$ be a 2-disk system. Then $M$ is a \v{C}ech system if and only if, there exist $D_i, D_j \in M$ such that $d_{ij} \in D_i \sqcap D_j$ satisfies $d_{ij} \in D_k$ for all $1 \leq k \leq m$.
\end{lemma}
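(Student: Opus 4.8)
The plan is to prove the two implications separately, with essentially all of the content sitting in the forward direction; throughout I assume $m \geq 2$, since for a single disk the statement degenerates (the Čech property is automatic, but no pair $i,j$ exists). The backward implication is immediate: if some ordered pair satisfies $d_{ij} \in D_k$ for every $k$, then taking $k=i$ and $k=j$ already gives $d_{ij} \in D_i \cap D_j$, so $d_{ij} \in \bigcap_{k} D_k$ and $M$ is a Čech system. (In fact one reads off from the construction that $d_{ij} \in D_i \cap D_j$ in both cases (1) and (2), so the hypothesis $d_{ij}\in D_i\sqcap D_j$ is what makes the candidate point legitimate.)

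For the forward direction I would assume $K := \bigcap_{k=1}^m D_k \neq \emptyset$ and aim to locate one of the points $d_{ij}$ inside $K$. Since $K$ is a bounded intersection of closed disks it is compact and convex, and its boundary is a union of circular arcs taken from the circles $\partial D_k$. I would then split into two cases according to whether $K$ meets two distinct boundary circles. In the first case there is a point $p \in K$ lying on $\partial D_i \cap \partial D_j$ for some $i \neq j$. Then $\partial D_i \cap \partial D_j \neq \emptyset$, so definition (1) applies and $p$ is one of the at most two intersection points of the two circles. A short check shows that the two orderings of the pair recover exactly these two points: since $\mathbf{n}_{ji} = -\mathbf{n}_{ij}$ and $\langle c_j - c_i, \mathbf{n}_{ij}\rangle = 0$, one finds $\{d_{ij}, d_{ji}\} = \partial D_i \cap \partial D_j$. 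Hence, after possibly swapping $i$ and $j$, we may assume $p = d_{ij}$, and since $p \in K \subseteq D_k$ for all $k$ this pair works.

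In the complementary case no point of $K$ lies on two distinct boundary circles, and I would argue that $K$ must then be a single disk. If $K$ had nonempty interior, its boundary would be a closed curve assembled from circular arcs; two arcs belonging to different circles would necessarily meet at a vertex lying on two circles and belonging to $K$, contradicting the case hypothesis, so $\partial K$ is a single full circle $\partial D_i$ and thus $K = D_i$. If instead $K$ had empty interior it would be a single point $p$, and $p$ would have to lie on at least two circles — otherwise $K$ would coincide with a single disk in a neighborhood of $p$ and be two-dimensional — again a contradiction. Therefore $K = D_i$, i.e. $D_i \subseteq D_k$ for all $k$. Fixing any $j \neq i$, the containment $D_i \subseteq D_j$ together with the case hypothesis (which forbids a shared boundary point, and hence internal tangency) forces $\partial D_i \cap \partial D_j = \emptyset$, so $d_{ij}$ is given by definition (2) with $\lambda = \Vert c_i - c_j\Vert / |r_i - r_j|$. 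The decisive observation is that $D_i \subseteq D_j$ implies $\Vert c_i - c_j\Vert \leq |r_i - r_j|$, whence $\lambda \leq 1$ and $D_i(c_i; \lambda r_i) \subseteq D_i$; consequently $d_{ij} \in \partial D_i(c_i; \lambda r_i) \subseteq D_i = K \subseteq D_k$ for all $k$, and this pair works.

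The step I expect to be the main obstacle is the convex-geometric analysis of $\partial K$ in the second case: making rigorous the claim that the absence of any two-circle vertex forces $K$ to be a single disk, while cleanly disposing of the degenerate subcases — a single-point intersection, and the internal-tangency configurations whose shared boundary point actually belongs to $K$ and would therefore throw us back into the first case. By contrast, the algebraic verifications (that both orderings recover the two boundary intersection points, and that $D_i \subseteq D_j$ yields $\lambda \leq 1$) are routine once the geometric picture is pinned down.
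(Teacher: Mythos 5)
Your argument is correct and follows essentially the same route as the paper: both proofs analyze the geometry of the intersection $K=\bigcap_k D_k$ and locate a suitable $d_{ij}$ either at a point of $K$ lying on two boundary circles (the paper's cases of a one-point intersection or a region bounded by several arcs) or, when no such point exists, by showing $K$ is a full disk $D_i$ and using the nested-disk definition of $d_{ij}$. Your version merely reorganizes the paper's trichotomy into a dichotomy and supplies a little more detail (the verification that $\{d_{ij},d_{ji}\}=\partial D_i\cap\partial D_j$, and the $\lambda\le 1$ computation showing $d_{ij}\in D_i$ in the nested case) that the paper leaves implicit.
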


\begin{proof}
Suppose $M$ is a \v{C}ech system. Define $A := \bigcap_{1\leq i \leq m} D_i \neq \emptyset$. Then $A$ has only one of the following geometries:
\begin{enumerate}
\item[(\textit{i})] $A=\{c_M\}$,
\item[(\textit{ii})] $A$ is a region bounded by more than one circumference arc,
\item[(\textit{iii})] $A=D_{i_0}$ for some $i_0 \in \{1,\ldots, m\}$.
\end{enumerate}

In the first case, necessarily $c_M$ belongs to the boundary of two or more disks. Let $D_i$ and $D_j$ be two disks in $M$ such that $c_M \in \partial D_i \cap \partial D_j$, it follows that $c_M = d_{ij}$ or $c_M = d_{ji}$, in both cases the lemma holds.

For the second case, if $a \in \partial A \subset A$ belongs to the boundary and is in the intersection of two arcs, say $\partial D_i$ and $\partial D_j$, then $a = d_{ij}$ or $a=d_{ji}$, and it satisfy $a \in D_k$ for every $1 \leq k \leq m$. 

For the last case, if $A=D_{i_0}$ for some $i_0$, then for each $j\neq i_0$ we have $d_{i_0 j} \in D_{i_0} = A$ and all of these points belongs to $D_k$ for all $D_k \in M$.

Therefore, in any case there exists such point $d_{ij}$. The converse is clear by definition of a \v{C}ech system.
\end{proof}

This criterion was presented in \cite[Sec. III]{CechGenelized} for a 2-disk system. 

Next, we define the map $\rho$, which in a certain sense quantifies the intersection of a 2-disk system. This map will allow us to discern the minimal scale in which a 2-disk system has the nonempty intersection property.

\begin{definition}\label{Definition:Function-rho}
Let $M=\{D_1,D_2,\ldots , D_m \}$ be a Vietoris-Rips system in the plane, with $m\geq 3$. We define 
\[\rho(M):=\max_{1\leq i,j \leq m} \left\{ \min_{k\neq i,j} \{r_k - \Vert d_{ij} - c_k \Vert \} \right\}. \]
If $\nu_M$ is the Vietoris-Rips scale of $M$, then we define the map $\rho_M : [\nu_M, \infty) \to \mathbb{R}$, $\lambda \mapsto \rho_M(\lambda)=\rho(M_\lambda)$.
\end{definition}

Given three disks $D_i(c_i;r_i)$, $D_j(c_j;r_j)$ and $D_k(c_k;r_k)$ in the 2-disk system $M=\{D_1,\ldots, D_m\}$, with Vietoris-Rips scale $\nu_M$, denote by $\Lambda_{i,j}^k : [\nu_M, \infty) \to \mathbb{R}$ the map $\lambda \mapsto \lambda r_k - \Vert d_{ij}(\lambda) - c_k \Vert$, where $d_{ij}(\lambda)$ is the element in $D_i(c_i; \lambda r_i) \sqcap D_j(c_j; \lambda r_j)$. In other words, $\Lambda_{i,j}^k(\lambda)$ is the \textit{signed distance} from the point $d_{ij}(\lambda)$ to the set $D_k(c_k; \lambda r_k)$.

If $r_i \neq r_j$, then for each $k \neq i,j$ the map $\lambda \mapsto \lambda r_k - \Vert d_{ij}(\lambda) - c_k \Vert$ is defined and continuous in the closed interval $\left[\Vert c_i - c_j \Vert / (r_i + r_j), \Vert c_i - c_j \Vert / |r_i - r_j| \right]$, since it is the signed distance from  an intersection point of two continuously deforming curves (therefore its position vary continuously as long as the intersection exists) to the continuously deforming set $D_k(c_k; \lambda r_k)$ with respect to $\lambda$. Also, the map $\lambda \mapsto \lambda r_k - \Vert d_{ij}(\lambda) - c_k \Vert$ vary linearly in the range $[\Vert c_i - c_j \Vert / |r_i - r_j|, \infty)$ because for $\lambda \geq \Vert c_i - c_j \Vert / |r_i - r_j|$, the term $\Vert d_{ij}(\lambda) - c_k \Vert$ remains constant. The left picture in Figure \ref{Figure:geometric-place} shows in bold red color the geometric place of $\{d_{ij}(\lambda), d_{ji}(\lambda)\}$ which vary continuously respect to the parameter $\lambda$ and also the distance from it to the fix point $c_k$.

\begin{figure}[hbt]
\centering
\includegraphics[width=0.7\textwidth]{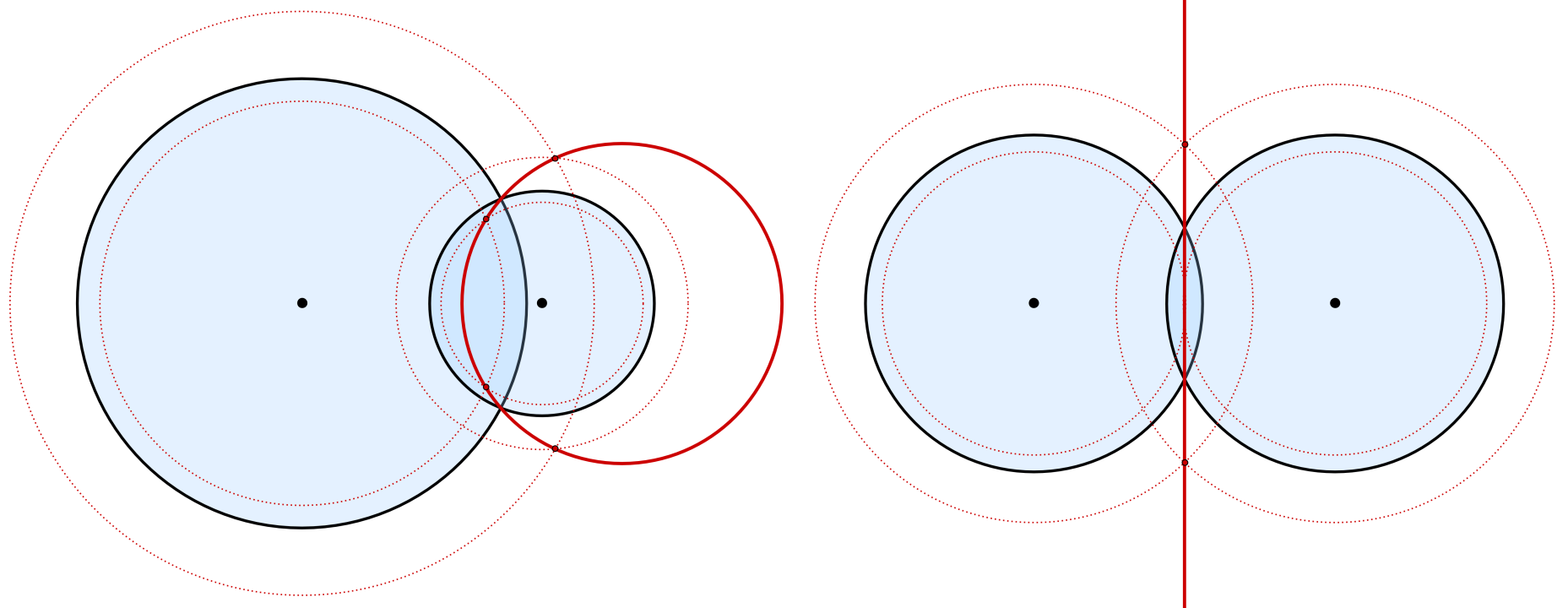}
\caption{Geometric place of $d_{ij}(\lambda)$.}
\label{Figure:geometric-place}
\end{figure}

\noindent On the other hand, for $r_i = r_j$ the points $\{d_{ij}(\lambda), d_{ji}(\lambda)\}$ vary continuously respect to $\lambda$ on the line showed in the right picture of Figure \ref{Figure:geometric-place}. Therefore, $\lambda r_k - \Vert d_{ij}(\lambda) - c_k \Vert$ also depend continuously of $\lambda$.

From the above argument, each map $\Lambda_{i,j}^k$ is continuous in the interval $[\nu_M, \infty)$, and by the continuity of the $\min$-$\max$ functions and that,
\[\rho_M(\lambda) = \max_{1\leq i,j \leq m} \left\{ \min_{k\neq i,j} \Lambda_{i,j}^k(\lambda) \right\}, \]
it follows that $\rho_M$ is also a continuous map in the interval $[\nu_M, \infty)$. However, the map $\rho_M$ is not differentiable, in general, to every point in such interval.

The map $\rho_M$ plays a key role in the rest of this work. We present the next characterization of \v{C}ech systems in terms of $\rho_M$.

\begin{lemma} \label{Lemma:Cech-system-nonnegative-scale}
Let $M$ be a 2-disk system. Then $M_\lambda$ is a \v{C}ech system if and only if, $\rho_M(\lambda) \geq 0$. In particular $\rho_M(\sqrt{4/3}\nu_M) \geq 0$.
\end{lemma}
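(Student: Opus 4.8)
The plan is to read off this characterization directly from the geometric criterion of Lemma~\ref{Lemma:Cech-generalized}, after reinterpreting the quantities appearing in the definition of $\rho_M$. The crucial observation is that for any disk $D_k(c_k;r_k)$ and any point $p$, the number $r_k - \Vert p - c_k \Vert$ is exactly the \emph{signed distance} from $p$ to $D_k$: it is nonnegative precisely when $\Vert p - c_k \Vert \leq r_k$, i.e. when $p \in D_k$. Applying this with $p = d_{ij}$, I obtain that $\min_{k \neq i,j}\{r_k - \Vert d_{ij} - c_k \Vert\} \geq 0$ holds if and only if $d_{ij} \in D_k$ for every $k \neq i,j$.

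First I would record that, by its very construction, the point $d_{ij}$ always lies in $D_i \cap D_j$: in the transversal case it lies on both boundaries $\partial D_i \cap \partial D_j$, and in the nested case it lies on the boundaries of the minimally rescaled disks and hence in both original disks. Consequently the membership condition ``$d_{ij} \in D_k$ for all $k \neq i,j$'' is equivalent to ``$d_{ij} \in D_k$ for all $1 \leq k \leq m$'', which is exactly the condition appearing in Lemma~\ref{Lemma:Cech-generalized}. Taking the maximum over all ordered pairs $(i,j)$, the existence of such a pair becomes equivalent to $\rho(M) \geq 0$. Combining these equivalences with Lemma~\ref{Lemma:Cech-generalized} yields: $M$ is a \v{C}ech system if and only if $\rho(M) \geq 0$.

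To obtain the stated form I would simply apply this equivalence to the rescaled system $M_\lambda$, whose intersection points are $d_{ij}(\lambda)$, whose centers remain $c_k$, and whose radii are $\lambda r_k$; this gives that $M_\lambda$ is a \v{C}ech system if and only if $\rho(M_\lambda) = \rho_M(\lambda) \geq 0$, valid for $\lambda \geq \nu_M$ where $\rho_M$ is defined (for $\lambda < \nu_M$ the system $M_\lambda$ fails to be Vietoris-Rips and the statement is vacuous). For the ``in particular'' claim I would invoke the generalized Vietoris-Rips Lemma~\ref{Lemma:VR-generalized} (equivalently Corollary~\ref{Corollary:Cech-system}) in dimension $d = 2$, which guarantees that $M_{\sqrt{4/3}\,\nu_M}$ is a \v{C}ech system; the first part of the lemma then forces $\rho_M(\sqrt{4/3}\,\nu_M) \geq 0$, noting that $\sqrt{4/3}\,\nu_M \geq \nu_M$ indeed lies in the domain of $\rho_M$.

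Given the two earlier lemmas, the argument is essentially a translation of a membership condition into a sign condition, so I do not expect a deep difficulty. The only points requiring care are the verification that $d_{ij} \in D_i \cap D_j$ in both cases of its definition (so that the disks $D_i, D_j$ may be dropped from the inner minimum without affecting the criterion), and the bookkeeping that the maximum over ordered pairs in the definition of $\rho$ matches the existential quantifier over pairs in Lemma~\ref{Lemma:Cech-generalized}, including both $d_{ij}$ and $d_{ji}$.
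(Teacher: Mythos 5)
Your proposal is correct and follows essentially the same route as the paper: translate the membership criterion of Lemma~\ref{Lemma:Cech-generalized} into the sign condition $\rho_M(\lambda)\geq 0$ via the signed distances $\Lambda_{i,j}^k$, then apply Corollary~\ref{Corollary:Cech-system} with $d=2$ for the final claim. Your extra check that $d_{ij}\in D_i\cap D_j$ in both cases of its definition (so the inner minimum may omit $k=i,j$) is a detail the paper leaves implicit, but it does not change the argument.
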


\begin{proof}
By Lemma \ref{Lemma:Cech-generalized}, $M_\lambda$ is a \v{C}ech system in the plane if and only if, there exists $d_{ij}(\lambda)$ such that  $d_{ij}(\lambda) \in D_k(c_k; \lambda \, r_k)$ for every $k\neq i,j$, i.e. $\Lambda_{i,j}^k(\lambda) \geq 0$ for every $k\neq i,j$, which is equivalent to $\rho_M (\lambda) \geq 0$. 

On the other hand, from Corollary \ref{Corollary:Cech-system} and taking $d=2$, the rescaled system $M_{\sqrt{4/3}\nu_M}$ is a \v{C}ech system, then by the first assertion, $\rho_M(\sqrt{4/3}\nu_M) \geq 0$.
\end{proof}

\section{The \texttt{Cech.scale} algorithm} \label{Section:Cech.scale-algorithm}

Our main algorithm (Algorithm \ref{Algorithm:Cech-scale-triplets}) computes the \v{C}ech scale of a given 2-disk system $M$. The key aspect on which this algorithm is based, is precisely the function $\rho_M$. Before we describe the algorithm, we need to analyze additional properties of $\rho_M$. 

It follows immediately, from Lemma \ref{Lemma:Cech-system-nonnegative-scale}, that $\rho_M(\lambda) \geq 0$ for every $\lambda \geq \mu_M$. 
Also, if at the Vietoris-Rips scale it holds that $\rho_M(\nu_M) \geq 0$, then $\mu_M = \nu_M$ by the minimality of the \v{C}ech scale. We conclude that in this case (this is, $\rho_M(\nu_M) \geq 0$), the \v{C}ech scale is easily computable.

On the other hand, if $\rho_M (\nu_M) < 0$ then the \v{C}ech scale satisfies $\mu_M \in (\nu_M, \sqrt{4/3}\nu_M]$ and moreover $\rho_M(\mu_M) = 0$. This is a consequence of the continuity of $\rho_M$, and the fact that $\rho_M(\lambda) < 0$ for every $\nu_M \leq \lambda < \mu_M$ and $\rho_M(\lambda) \geq 0$ for $\mu_M \leq \lambda$. Thus, to find the \v{C}ech scale of a 2-disk system $M$ for which $\rho_M(\nu_M) < 0$, we need to solve the equation $\rho_M(\lambda) = 0$.

We propose a numerical approach, in order to solve the equation $\rho_M(\lambda) = 0$, to calculate the \v{C}ech scale under the hypothesis $\rho_M(\nu_M) < 0$,  since in this case, we actually know that $\mu_M \in (\nu_M, \sqrt{4/3}\nu_M]$ (see Section \ref{Section:Rips.Cech.systems}) as consequence of the generalized Vietoris-Rips Lemma. We have chosen the \textit{bisection} method for this purpose. We will denote the implementation of bisection method for the map $\rho_M$ through the interval $[a,b]$, by \texttt{bisection}$(\rho_M, a, b)$. The output of \texttt{bisection}$(\rho_M, a, b)$ is a real number $\lambda \in [a,b]$ such that $\rho_M(\lambda) =0$. For the numerical method we are working with a precision of $10^{-12}$.

It is important to mention that the numerical method \textit{regula falsi} was also used instead of the numerical method of bisection, in order to calculating the \v{C}ech scale. However, in our context the efficiency of the program using the \textit{regula falsi} numerical method is not better than if the numerical method of bisection is used.

The Algorithm \ref{Algorithm:Cech-scale} (below) has as input a 2-disk system $M$, and produces as output the \v{C}ech scale $\mu_M$ as well as the intersection point $\{c_M\} =  \bigcap_{D_i \in M} D_i(c_i; \mu_M\, r_i)$. This algorithm takes a naive approach to calculate the \v{C}ech scale, and is established to completeness and to be a reference for the principal algorithm (Algorithm \ref{Algorithm:Cech-scale-triplets}).
%\vspace*{1mm}
\newpage

\begin{algorithm}[hbt] \label{Algorithm:Cech-scale}
    \SetKwInOut{Input}{Input}
    \SetKwInOut{Output}{Output}

    \Input{A 2-disk system $M$.}
    \Output{The \v{C}ech scale $\mu_M$ and the intersection point $c_M$.}
%    Calculate $\nu_M$\;
%    $\mu^* \leftarrow \nu_M$\;
    Calculate $\mu^* \leftarrow \nu_M$\;
    \eIf{$\rho_M(\mu^*) \geq 0$}{ \label{Algorithm-line:rho-nonnegative}
	    \Return($\mu_M = \mu^*, c_M$) \label{Algorithm-line:return-vietoris-rips}
    }{
	    Update $\mu^* \leftarrow \sqrt{4/3}\, \nu_M$\;
    }
	Update $\mu^* \leftarrow$\texttt{bisection}$(\rho_M, \nu_M, \mu^*)$\; \label{Algorithm-line:bisection}
	Calculate $\sqcap M_{\mu^*} \leftarrow \{ d_{ij}(\mu^*) \in D_i \sqcap D_j \mid D_i, D_j \in M_{\mu^*} \} \bigcap \left( \bigcap_{D_i \in M} D_i(c_i; \mu^* \, r_i) \right)$\; \label{Algorithm-line:intersection.point}
	\If{$|\sqcap M_{\mu^*}| > 1$}{\label{Algorithm-line:cardinality}
	Find $\mu' \in (\nu_M, \mu^*)$ such that $\rho_M(\mu') > 0$\; \label{Algorithm-line:start-IF}
	Update $\mu^* \leftarrow \mu'$\;
	go to (\ref{Algorithm-line:bisection})\;
	}\label{Algorithm-line:end-IF}
	\Return($\mu_M = \mu^*, c_M$) \label{Algorithm-line:Returns-Cech-scale}
    \caption{The \v{C}ech scale calculation for a 2-disk system.}
\end{algorithm}
%\bigskip

The following lemma claims that the Algorithm \ref{Algorithm:Cech-scale} is consistent.

\begin{lemma} \label{Lemma:Cech.scale}
For any 2-disk system $M$, the Algorithm \ref{Algorithm:Cech-scale} has as output the \v{C}ech scale $\mu_M$ of $M$, and the unique intersection point $\{c_M\} = \bigcap_{D_i \in M} D_i(c_i; \mu_M \, r_i)$.
\end{lemma}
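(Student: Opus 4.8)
The plan is to verify that each branch of Algorithm \ref{Algorithm:Cech-scale} returns $\mu_M$ together with the unique point of $\bigcap_{D_i\in M} D_i(c_i;\mu_M r_i)$, relying throughout on two ingredients available before the algorithm. The first is the sign dictionary
\[ \rho_M(\lambda)\ge 0 \iff M_\lambda \text{ is a \v{C}ech system} \iff \lambda \ge \mu_M, \]
which follows from Lemma \ref{Lemma:Cech-system-nonnegative-scale} together with the fact that $\{\lambda : M_\lambda \text{ is \v{C}ech}\}=[\mu_M,\infty)$ (nestedness of the intersections plus attainment of the infimum). The second is Lemma \ref{Lemma:unique-point}, which identifies $\mu_M$ as the unique scale at which the rescaled intersection collapses to a point.

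First I would dispatch the easy branch. Since every \v{C}ech system is a Vietoris-Rips system we always have $\mu_M\ge\nu_M$; if the test $\rho_M(\mu^*)\ge 0$ on line \ref{Algorithm-line:rho-nonnegative} succeeds at $\mu^*=\nu_M$, then $M_{\nu_M}$ is already \v{C}ech, forcing $\mu_M\le\nu_M$ and hence $\mu_M=\nu_M=\mu^*$. By Lemma \ref{Lemma:unique-point} the intersection at this scale is the single point $c_M$, so the return on line \ref{Algorithm-line:return-vietoris-rips} is correct.

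For the branch $\rho_M(\nu_M)<0$ I would first certify the bracket: Corollary \ref{Corollary:Cech-system} gives $\mu_M\in(\nu_M,\sqrt{4/3}\,\nu_M]$ and Lemma \ref{Lemma:Cech-system-nonnegative-scale} gives $\rho_M(\sqrt{4/3}\,\nu_M)\ge 0$, so $\rho_M$ changes sign on $[\nu_M,\sqrt{4/3}\,\nu_M]$ and the call on line \ref{Algorithm-line:bisection} returns some $\mu^*$ with $\rho_M(\mu^*)=0$, whence $\mu^*\ge\mu_M$ by the sign dictionary. The heart of the argument is the equivalence $|\sqcap M_{\mu^*}|=1 \iff \mu^*=\mu_M$. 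For the implication $\mu^*=\mu_M\Rightarrow|\sqcap M_{\mu^*}|=1$: by Lemma \ref{Lemma:unique-point} we get $\bigcap_i D_i(c_i;\mu^* r_i)=\{c_M\}$, and since $\rho_M(\mu^*)=0$ forces a maximizing pair $(i,j)$ with $d_{ij}(\mu^*)$ lying in every disk, that point must be $c_M$, so $\sqcap M_{\mu^*}=\{c_M\}$. For the converse I would argue the contrapositive: if $\mu^*>\mu_M$ then $A=\bigcap_i D_i(c_i;\mu^* r_i)$ is not a single point, and by the geometric case analysis in Lemma \ref{Lemma:Cech-generalized} it is a region whose boundary is a cycle of at least two circular arcs; each vertex where two arcs meet is some $d_{ij}(\mu^*)$ lying in $A$, so $|\sqcap M_{\mu^*}|\ge 2$. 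Granting this, the test on line \ref{Algorithm-line:cardinality} detects exactly whether $\mu^*=\mu_M$: when it holds, line \ref{Algorithm-line:Returns-Cech-scale} returns $\mu_M$ and $c_M$; when it fails, $\mu^*>\mu_M$, and lines \ref{Algorithm-line:start-IF}--\ref{Algorithm-line:end-IF} select $\mu'\in(\nu_M,\mu^*)$ with $\rho_M(\mu')>0$ (hence $\mu'>\mu_M$) and re-enter bisection on the strictly smaller bracket $[\nu_M,\mu']$, which still straddles $\mu_M$. Because $\rho_M<0$ on all of $[\nu_M,\mu_M)$, this bracket has a single sign change at $\mu_M$, so the successive bisections are trapped above $\mu_M$ and drive $\mu^*$ to $\mu_M$, where the cardinality test passes and the algorithm halts with the correct output.

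The step I expect to be the main obstacle is the contrapositive direction of the boxed equivalence, namely showing that whenever $A$ is larger than a point the corner set $\sqcap M_{\mu^*}$ has at least two elements. This forces one to handle the degenerate geometry of Lemma \ref{Lemma:Cech-generalized}(\textit{iii}), where $A$ is a whole disk $D_{i_0}$: there I would verify that the constraint $\rho_M(\mu^*)=0$ precludes any disk $D_k$ strictly containing $D_{i_0}$ from being the binding constraint, so that at least two of the $d_{i_0 j}(\mu^*)$ are tangency points on $\partial A$ and still supply two distinct corners. A secondary subtlety is justifying feasibility of the ``find $\mu'$'' step and the convergence of the repeated bisections to $\mu_M$ rather than to a spurious zero of $\rho_M$ above $\mu_M$; this is precisely what the cardinality-driven loop is designed to prevent, and I would anchor the termination argument on the strict negativity of $\rho_M$ on $[\nu_M,\mu_M)$, which makes $\mu_M$ the unique sign-change point of every bracket the algorithm constructs.
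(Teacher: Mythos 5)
Your proposal follows the same route as the paper: dispatch the branch $\rho_M(\nu_M)\ge 0$ directly, bracket the \v{C}ech scale in $[\nu_M,\sqrt{4/3}\,\nu_M]$ via Corollary \ref{Corollary:Cech-system} and Lemma \ref{Lemma:Cech-system-nonnegative-scale}, and use the cardinality of $\sqcap M_{\mu^*}$ to decide whether the root returned by bisection is $\mu_M$. Your corner-counting argument for the equivalence $|\sqcap M_{\mu^*}|=1 \iff \mu^*=\mu_M$ is actually more explicit than the paper's, which simply defers to ``the geometry of the boundary'' in the proof of Lemma \ref{Lemma:Cech-generalized}; the worry you flag about case (\textit{iii}) (where $A$ is a whole disk $D_{i_0}$) is legitimate but resolves as you sketch, since every tangency point $d_{i_0 j}(\mu^*)$ lies in $A$ and one only needs two of them (or some $d_{jk}$) to be distinct.

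The one place where your argument is genuinely weaker than the paper's is termination of the loop in lines \ref{Algorithm-line:start-IF}--\ref{Algorithm-line:end-IF}. Showing that each new bracket $[\nu_M,\mu']$ still straddles $\mu_M$ and that the spurious roots found are strictly decreasing gives convergence to $\mu_M$, not arrival at $\mu_M$ in finitely many passes: a priori $\rho_M$ could have infinitely many zeros accumulating at $\mu_M$ from above, and your ``driven to $\mu_M$'' conclusion would then never trigger the exit test. The paper closes this by remarking that $\rho_M$ is algebraic over $\mathbb{Q}$, hence has only finitely many roots in $[\nu_M,\sqrt{4/3}\,\nu_M]$, so only finitely many spurious roots can ever be returned. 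You should either import such a finiteness argument or push your own observation one step further: with the bracketing invariant $\rho_M(a_n)<0\le\rho_M(b_n)$, exact bisection can only converge to a sign-change point of $\rho_M$, and since $\rho_M<0$ precisely on $[\nu_M,\mu_M)$ the unique such point is $\mu_M$ itself.
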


\begin{proof}
In the case $\rho_M(\nu_M) \geq 0$, it is clear that the algorithm has generated asseverated data (steps (\ref{Algorithm-line:rho-nonnegative})-(\ref{Algorithm-line:return-vietoris-rips})). In otherwise, for the case $\rho_M(\nu_M) < 0$, we assign $\mu^* := \sqrt{4/3}\, \nu_M$. 

Then, $\rho_M(\nu_M)\cdot \rho_M(\mu^*) \leq 0$ and lets call again $\mu^*$ the output root in step (\ref{Algorithm-line:bisection}). To check if $\mu^*$ is the \v{C}ech scale we are looking for, we calculate in step (\ref{Algorithm-line:intersection.point}) the set of pairwise intersection points of the $\mu^*$-rescaled system, contents in $\bigcap_{D_i \in M} D_i(c_i; \mu^* \, r_i)$. 

If the set $\sqcap M_{\mu^*}$ is unitary, then necessarily $\bigcap_{D_i \in M} D_i(c_i; \mu^* \, r_i)$ is unitary, due the geometry of its boundary (see proof of Lemma \ref{Lemma:Cech-generalized}). In such case (negative validation of step (\ref{Algorithm-line:cardinality})) the steps (\ref{Algorithm-line:start-IF})-(\ref{Algorithm-line:end-IF}) are omitted and, from Lemma \ref{Lemma:unique-point}, the algorithm returns the \v{C}ech scale as well as the intersection point $\sqcap M_{\mu^*}$ at step (\ref{Algorithm-line:Returns-Cech-scale}); in otherwise (positive validation of step (\ref{Algorithm-line:cardinality})), the root $\mu^*$ is not the \v{C}ech scale (see Figure \ref{Figure:rho-plot}), and then we should find another scale $\mu' \in (\nu_M, \mu^*)$ such that $\rho_M(\mu') > 0$, and repeat from step (\ref{Algorithm-line:bisection}). It is possible, for some configurations of the 2-disk system, that the map $\rho_M$ has a behavior as in Figure \ref{Figure:rho-plot}.

\begin{figure}[hbt]
\centering
\includegraphics[width=0.5\textwidth]{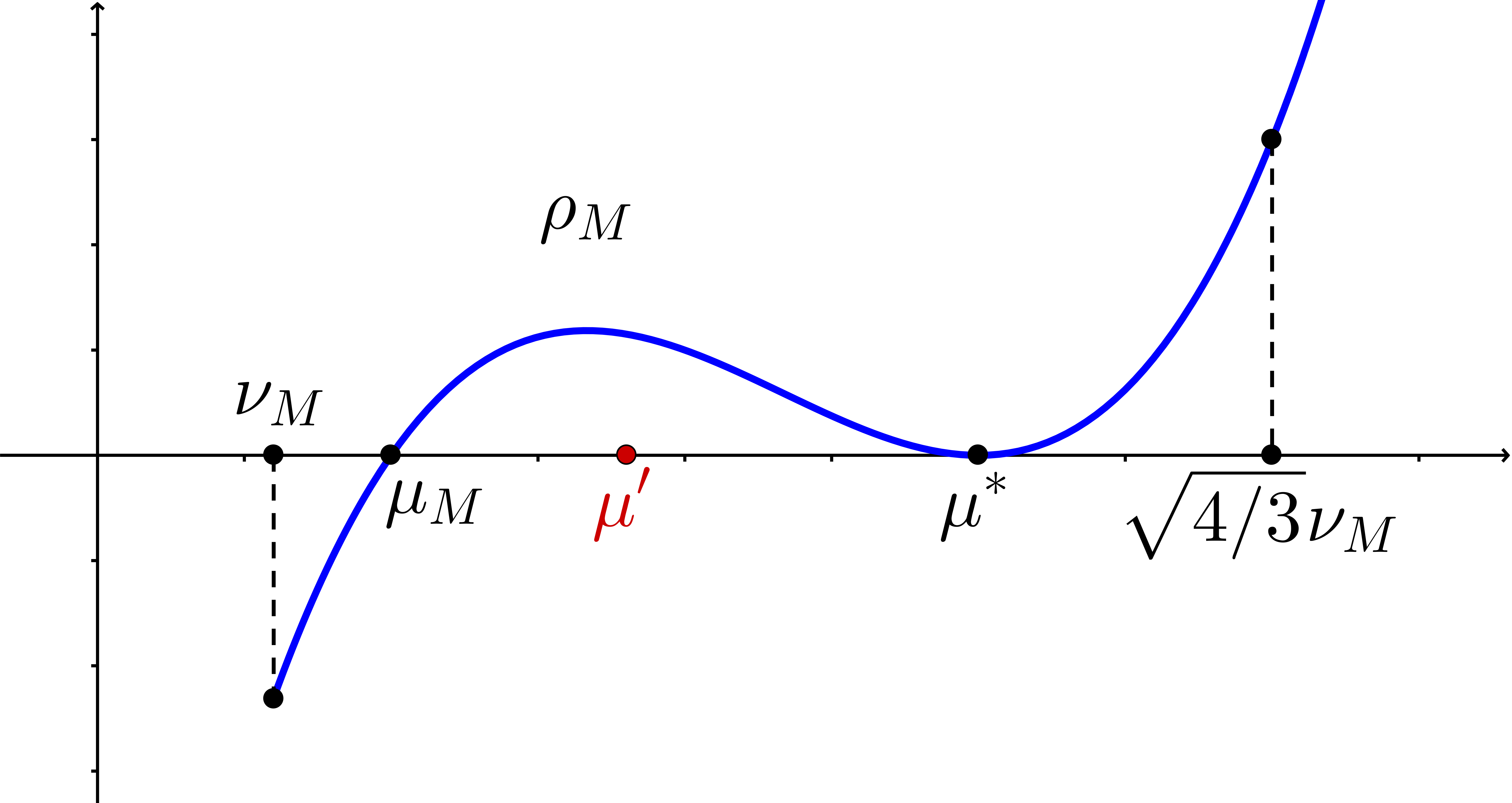}
\caption{Plot of the map $\rho:[\nu_M, \infty) \to \mathbb{R}$.}
\label{Figure:rho-plot}
\end{figure}

The last iterative part is a finite process because $\rho_M$ is algebraic over $\mathbb{Q}$, then eventually the set $\sqcap M_{\mu^*}$ will be unitary and the \v{C}ech scale will be calculated.
\end{proof}

The step (\ref{Algorithm-line:intersection.point}) in Algorithm \ref{Algorithm:Cech-scale} is necessary, as show the following example, in which the map $\rho_M$ has another root along side the \v{C}ech scale in the interval $[\nu_M, \sqrt{4/3}\nu_M]$.

\begin{example} \label{Example:Figure:colinear-disks-system}\normalfont
Let $M = \{D_1, D_2, D_3\}$ be the 2-disk system in Figure \ref{Figure:colinear-disks-system}. A direct calculation, yields that $\nu_M = \mu_M = 0.8947$. On the other hand, we also have that $\rho_M(\nu_M) = \rho_M(\mu_M) = \rho_M(\lambda) = 0$ for $\lambda = 1$. Therefore, the map $\rho_M$ has more than one root on the interval $[\nu_M, \sqrt{4/3}\nu_M] = [0.8947, 1.0331]$.

\begin{figure}[hbt]
\centering
\includegraphics[width=0.6\textwidth]{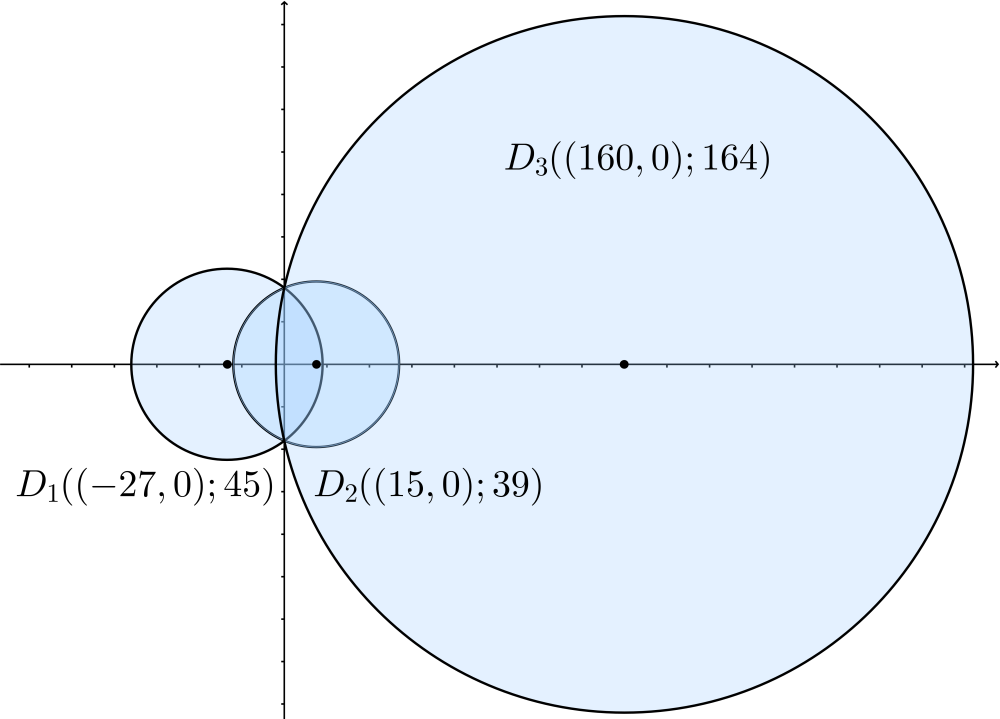}
\caption{The 2-disk system $M$.}
\label{Figure:colinear-disks-system}
\end{figure}
\end{example}

In Example \ref{Example:Figure:colinear-disks-system} the Vietoris-Rips scale $\nu_M$, of the 2-disk system $M$, agrees with the \v{C}ech scale $\mu_M$; however is possible to construct more sophisticated (and symmetric) disk system $M$ such that $\nu_M < \mu_M$ and $\rho_M(\nu_M) < 0 < \rho_M(\mu_M)$, for which there exists $\mu' \in (\nu, \sqrt{4/3}\nu_M)$ with $\rho_M(\mu')=0$.

On the other hand, if the 2-disk system $M$ consists of just three disks and $\rho_M(\nu_M) < 0$, then its \v{C}ech scale can be computed with only one application of the numerical method, as we asseverate in the following lemma.
%\newpage

\begin{lemma} \label{Lemma:triplets}
Let $M=\{D_1, D_2, D_3\}$ be a 2-disk system such that $\rho_M(\nu_M) < 0$. Then, there exists a unique root of the map $\rho_M$ in $[\nu_M,  \sqrt{4/3}\nu_M]$. Thus, $\mu_M$ will be the output of {\normalfont \texttt{bisection}}$(\rho_M, \nu_M, \sqrt{4/3}\nu_M)$.
\end{lemma}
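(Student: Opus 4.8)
The plan is to reduce the statement to a single strict-positivity claim about $\rho_M$ and then settle that claim by a geometric case analysis that the hypothesis $\rho_M(\nu_M)<0$ makes tractable. Recall from the discussion preceding Algorithm \ref{Algorithm:Cech-scale} that, since $\rho_M(\nu_M)<0$, we already know $\mu_M\in(\nu_M,\sqrt{4/3}\,\nu_M]$, that $\rho_M(\mu_M)=0$, that $\rho_M(\lambda)<0$ for every $\lambda\in[\nu_M,\mu_M)$, and (by Corollary \ref{Corollary:Cech-system} together with Lemma \ref{Lemma:Cech-system-nonnegative-scale}) that $\rho_M(\sqrt{4/3}\,\nu_M)\geq 0$. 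Thus $\rho_M$ changes sign on $[\nu_M,\sqrt{4/3}\,\nu_M]$ and $\mu_M$ is its \emph{smallest} zero there. Consequently, both assertions follow at once if I can prove that $\rho_M(\lambda)>0$ \emph{strictly} for every $\lambda\in(\mu_M,\sqrt{4/3}\,\nu_M]$: this forces $\mu_M$ to be the unique zero in $[\nu_M,\sqrt{4/3}\,\nu_M]$, and then \texttt{bisection}$(\rho_M,\nu_M,\sqrt{4/3}\,\nu_M)$, applied to a continuous function with a single sign change, necessarily converges to it.

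The first step is to show that the hypothesis excludes collinear centers. If $c_1,c_2,c_3$ were collinear, then projecting each rescaled disk onto the line through the centers reduces the nonemptiness of $\bigcap_i D_i(c_i;\lambda r_i)$ to the nonemptiness of the intersection of three intervals on that line; a short computation then gives $\mu_M=\max_{i<j}\Vert c_i-c_j\Vert/(r_i+r_j)=\nu_M$, whence $\rho_M(\nu_M)=\rho_M(\mu_M)=0$, contradicting $\rho_M(\nu_M)<0$. Therefore the centers are not collinear. The payoff is that three circles with non-collinear centers can share \emph{at most one} common point (two common points would place all three centers on a perpendicular bisector, forcing collinearity); this single fact is what rules out the pathological behavior of $\rho_M$ seen in Example \ref{Example:Figure:colinear-disks-system}.

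With non-collinearity in hand, I would establish the key claim: whenever $A_\lambda:=\bigcap_i D_i(c_i;\lambda r_i)$ has nonempty interior, then $\rho_M(\lambda)>0$. For $\lambda\in(\mu_M,\sqrt{4/3}\,\nu_M]$ this applies, because by Lemma \ref{Lemma:unique-point} the set $A_\lambda$ is a single point only at $\lambda=\mu_M$, and a convex intersection of disks containing more than one point must have nonempty interior (a genuine segment cannot lie on any of the bounding circles). Now $A_\lambda$ is convex with boundary made of circular arcs; unless it is a single disk, $\partial A_\lambda$ has at least two vertices, each being a point of $\partial D_i\cap\partial D_j$ for two of the disks, hence equal to $d_{ij}(\lambda)$ or $d_{ji}(\lambda)$ and lying in the third disk $D_k(\lambda)$. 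Since at most one such vertex can also lie on $\partial D_k(\lambda)$ (that would be a second common point of all three circles), at least one vertex lies \emph{strictly} inside its opposite disk, making the corresponding $\Lambda_{i,j}^{k}(\lambda)>0$ and therefore $\rho_M(\lambda)=\max_{i,j}\min_{k}\Lambda_{i,j}^{k}(\lambda)>0$. The residual case in which $A_\lambda$ equals a single disk $D_{i_0}(\lambda)$ (strict containment in the other two) is handled directly, the oriented point $d_{i_0 j}(\lambda)$ then lying strictly inside $D_k(\lambda)$.

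The main obstacle is precisely this last claim. The delicate points are confirming that every boundary vertex of $A_\lambda$ coincides with one of the oriented points $d_{ij}$ or $d_{ji}$ (so that it is actually seen by the maximum defining $\rho_M$), and disposing cleanly of the degenerate configurations — a vertex that is simultaneously a triple-circle point, an internal tangency, or a full containment — where a naive argument would only yield $\rho_M(\lambda)\geq 0$ rather than the strict inequality. Non-collinearity is the lever throughout: it caps the number of common points of the three circles at one, which is exactly what upgrades ``$\geq 0$'' to ``$>0$'' and thereby guarantees the uniqueness of the root on which the single bisection in the statement relies.
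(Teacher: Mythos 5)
Your proposal is correct and follows essentially the same route as the paper: rule out collinear centers using the hypothesis $\rho_M(\nu_M)<0$, then show that for $\lambda>\mu_M$ the boundary of $\bigcap_i D_i(c_i;\lambda r_i)$ contains at least two distinct vertices of the form $d_{ij}(\lambda)$, of which at most one can be a common point of all three circles (else the centers would be collinear), so some $\Lambda_{i,j}^k(\lambda)>0$ and hence $\rho_M(\lambda)>0$. Your write-up is, if anything, slightly more explicit than the paper's about why the root is unique and why bisection then necessarily returns $\mu_M$.
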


\begin{proof}
It is straightforward to verify that $\rho_M(\nu_M) \geq 0$ for any configuration with $c_1$, $c_2$ and $c_3$ collinear. Thus, it follows $\{c_1, c_2, c_3\}$ is in general position.

Let $\mu_M$ be the \v{C}ech scale of the 2-disk system $M$ and $c_M$ the intersection point. Define $A(\lambda) := \cap_{i=1}^{3} D_i(c_i; \lambda r_i)$. Note that $A(\mu_M) = \{c_M\}$.

We claim that there exist at least two distinct points $p$ and $q$, in the set 
\[\{ d_{ij}(\lambda) \in D_i(c_i; \lambda r_i) \sqcap D_j(c_j; \lambda r_j) \mid D_i, D_j \in M_{\lambda}\} \subset \partial A(\lambda).\] This is evident if $\partial A(\lambda)$ is given by two or more circumference arcs. On the other hand, if $\partial A(\lambda) = \partial D_i(c_i; \lambda r_i)$ for some $1 \leq i \leq 3$, then $d_{ij}(\lambda), d_{ik}(\lambda) \in D_i(c_i; \lambda r_i)$ for $j \neq i$ and $k \neq i$. Moreover, $d_{ij}(\lambda) \neq d_{ik}(\lambda)$ since $\{c_1, c_2, c_3\}$ is not a collinear set.

If both points $p$ and $q$ belongs to each boundary of the three disks, $p,q \in \cap_{i=1}^{3} \partial D_i(c_i; \lambda r_i)$, then $\{c_1, c_2, c_3\}$ would be also a collinear set. Without loss of generality, we suppose that $p=d_{23}(\lambda) \not\in \partial D_1(c_1; \lambda r_1)$. Then, $\rho_M(\lambda) \geq \lambda r_1 - \Vert d_{23}(\lambda) - c_1 \Vert > 0$, and the lemma follows. Of course, the choose of the indexes depend of the value of $\lambda$, but the above arguments show that always there exist such combination which guarantee that $\rho_M(\lambda)$ is positive for $\lambda > \mu_M$.
\end{proof}

The following algorithm takes advantage of the unicity property for the root of $\rho_M$, in a 2-disk system with three disks. Essentially, the algorithm consist in iterating the Algorithm \ref{Algorithm:Cech-scale} systematically over every triplet of disks from $M$.
%\vspace*{2mm}

\begin{algorithm}[hbt] \label{Algorithm:Cech-scale-triplets}
    \SetKwInOut{Input}{Input}
    \SetKwInOut{Output}{Output}

    \Input{A 2-disk system $M$.}
    \Output{Cech scale $\mu_M$ and intersection point $c_M$.}
    Calculate $\nu_M$\; \label{Algorithm-line:Triplets-Rips-scale}
    $\mu^* \leftarrow \nu_M$\;
    \If{$\rho_M(\mu^*) \geq 0$}{
    	\Return($\mu_M=\mu^*; c_M$)
    }\label{Algorithm-line:Triplets-return-VR}
    \For{$1 \leq i < j < k \leq |M|$}{ \label{Algorithm-line:Triplets-start-for}
    	$N \leftarrow \{D_i, D_j, D_k\}$\;
    	Calculate $\nu_N$\;
    	\If{$\sqrt{4/3}\, \nu_N \geq \mu^*$}{ \label{Algorithm-line:Triplets-condition}
	    	\eIf{$\rho_N(\nu_N) \geq 0$}{
	    		$\mu_N \leftarrow \nu_N$\;
	    	}{	$\mu_N \leftarrow$ \texttt{bisection($\rho_N, \nu_N, \sqrt{4/3}\, \nu_N$)}}
	   		\If{$\mu_N > \mu^*$}{\label{Algorithm-line:Triplets-start-if}
	   			Update $\mu^* \leftarrow \mu_N$\;
	   		}\label{Algorithm-line:Triplets-end-if}
    	}
    }\label{Algorithm-line:Triplets-end-for}
    \Return($\mu_M = \mu^*; c_M$)
    \caption{\texttt{Cech.scale}.}
\end{algorithm}

\begin{theorem} \label{Theorem:Cech.scale}
For any 2-disk system $M$, the Algorithm \ref{Algorithm:Cech-scale-triplets} has as output the \v{C}ech scale $\mu_M$ of $M$, and the unique intersection point $\{c_M\} = \bigcap_{D_i \in M} D_i(c_i; \mu_M \, r_i)$.
\end{theorem}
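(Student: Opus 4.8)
The plan is to reduce the computation of the global \v{C}ech scale $\mu_M$ to the computation of \v{C}ech scales of the triplets of $M$, exploiting Helly's theorem in the plane, and then to verify that the inner loop of Algorithm \ref{Algorithm:Cech-scale-triplets} computes each relevant triplet scale correctly while never discarding the triplet that attains the maximum. First I would dispose of the easy branch: if $\rho_M(\nu_M) \geq 0$, then by Lemma \ref{Lemma:Cech-system-nonnegative-scale} the system $M_{\nu_M}$ is already a \v{C}ech system, so $\mu_M \leq \nu_M$; since always $\nu_M \leq \mu_M$ (every \v{C}ech system is a Vietoris-Rips system, cf.\ Corollary \ref{Corollary:Cech-system}), we get $\mu_M = \nu_M$ and the algorithm returns correctly at line \ref{Algorithm-line:Triplets-return-VR}. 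The cases $|M| \leq 2$ are subsumed here, since then $\mu_M = \nu_M$ as noted in Section \ref{Section:Rips.Cech.systems}. Hence I assume $|M| \geq 3$ and $\rho_M(\nu_M) < 0$.

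The central step is the identity
\[ \mu_M = \max\{\mu_N \mid N \subseteq M, \ |N| = 3\}. \]
To prove it, fix $\lambda \geq 0$ and observe that the disks $D_i(c_i; \lambda r_i)$ are convex subsets of $\mathbb{R}^2$; by Helly's theorem (applied to $d+1 = 3$ sets, exactly as in the closing argument of Lemma \ref{Lemma:VR-generalized}), $M_\lambda$ is a \v{C}ech system if and only if $N_\lambda$ is a \v{C}ech system for every triplet $N \subseteq M$. Using that $N_\lambda$ is a \v{C}ech system precisely when $\lambda \geq \mu_N$ (from the minimality of $\mu_N$ together with the closedness and nesting argument recorded after Definition \ref{Definition:VR-Cech-scales}), this is equivalent to $\lambda \geq \mu_N$ for all triplets $N$, i.e.\ $\lambda \geq \max_N \mu_N$. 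Taking the infimum over such $\lambda$ yields the displayed identity.

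Next I would check that the inner loop computes $\max_N \mu_N$. For a fixed triplet $N$ the inner branch computes $\mu_N$ correctly: if $\rho_N(\nu_N) \geq 0$ then $\mu_N = \nu_N$ as above, and otherwise $\rho_N(\nu_N) < 0$ and Lemma \ref{Lemma:triplets} guarantees that $\rho_N$ has a unique root in $[\nu_N, \sqrt{4/3}\,\nu_N]$, equal to $\mu_N$ (because $\rho_N(\mu_N) = 0$ and $\rho_N$ is continuous with $\rho_N < 0$ below $\mu_N$), so \texttt{bisection}$(\rho_N, \nu_N, \sqrt{4/3}\,\nu_N)$ returns $\mu_N$. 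It remains to justify the guard $\sqrt{4/3}\,\nu_N \geq \mu^*$ at line \ref{Algorithm-line:Triplets-condition}: by Corollary \ref{Corollary:Cech-system} with $d = 2$ we have $\mu_N \leq \sqrt{4/3}\,\nu_N$, so whenever $\sqrt{4/3}\,\nu_N < \mu^*$ the triplet satisfies $\mu_N < \mu^*$ and cannot increase the running maximum; skipping it is harmless. Conversely, the running maximum $\mu^*$ never exceeds $\mu_M$ (it starts at $\nu_M$ and every update sets it to some $\mu_N \leq \mu_M$), so for a maximizing triplet $N^*$ with $\mu_{N^*} = \mu_M$ we have $\sqrt{4/3}\,\nu_{N^*} \geq \mu_{N^*} = \mu_M \geq \mu^*$ when $N^*$ is reached; hence $N^*$ passes the guard and $\mu^*$ is raised to $\mu_M$. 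Therefore the final value of $\mu^*$ equals $\max_N \mu_N = \mu_M$.

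Finally, for the intersection point: once $\mu^* = \mu_M$, Lemma \ref{Lemma:unique-point} gives $\bigcap_{D_i \in M} D_i(c_i; \mu_M r_i) = \{c_M\}$, while for the maximizing triplet $\bigcap_{D_i \in N^*} D_i(c_i; \mu_M r_i) = \{c_{N^*}\}$ is likewise a single point. Since $c_M$ lies in the triplet intersection, $c_M = c_{N^*}$, so the point produced while processing $N^*$ (the admissible intersection point $d_{ij}(\mu^*)$ of Algorithm \ref{Algorithm:Cech-scale}) is exactly $c_M$. I expect the main obstacle to be the Helly reduction together with the guard argument: one must be careful that the interleaving of the running maximum $\mu^*$ with the pruning test $\sqrt{4/3}\,\nu_N \geq \mu^*$ never filters out the triplet attaining $\mu_M$, which is precisely what the inequality $\mu_N \leq \sqrt{4/3}\,\nu_N$ of Corollary \ref{Corollary:Cech-system} secures.
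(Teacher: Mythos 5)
Your proposal is correct and follows essentially the same route as the paper's proof: dispatch the $\rho_M(\nu_M)\geq 0$ branch, reduce $\mu_M$ to $\max_N \mu_N$ over triplets via Helly's theorem, invoke Lemma \ref{Lemma:triplets} for the single bisection per triplet, and justify the pruning guard with $\mu_N \leq \sqrt{4/3}\,\nu_N$. You are in fact somewhat more explicit than the paper in verifying that the guard never discards the maximizing triplet and in identifying the returned point $c_{N^*}$ with $c_M$, but these are refinements of the same argument rather than a different approach.
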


\begin{proof}
If $\rho_M(\nu_M) \geq 0$, the algorithm returns the right data: steps (\ref{Algorithm-line:Triplets-Rips-scale})-(\ref{Algorithm-line:Triplets-return-VR}).

On the other hand, by Helly's Theorem (cf. \cite{Bollobas:2006}) the 2-disk system $M = \{D_1, \ldots, D_m\}$, as a finite family of convex sets in the plane, has a nonempty intersection $\bigcap_{D_i \in M} D_i$ if, and only if, $D_i \cap D_j \cap D_k \neq \emptyset$ for every triplet $1 \leq i < j < k \leq m$. Let $\mu$ be the maximal \v{C}ech scale over every triplet in the disk system $M$, i.e. 
\[ \mu = \max \{ \mu_{N} \mid N = \{D_i, D_j, D_k\} \subset M \} .\]
It follows that every $\mu$-rescaled triplet has a nonempty intersection. Hence, the $\mu$-rescaled 2-disk system $M_\mu$ also has the nonempty intersection property. Moreover, $\bigcap_{D_i \in M} D_i(c_i; \mu r_i) \subset \bigcap_{D_i \in N \subset M} D_i(c_i; \mu r_i)$ for every triplet $N \subset M$. Therefore, $\mu$ is actually the \v{C}ech scale of the 2-disk system $M$, this is, $\mu_M = \mu$.

In steps (\ref{Algorithm-line:Triplets-start-for})-(\ref{Algorithm-line:Triplets-end-for}) the algorithm search the scale $\mu$ systematically, over every triplet $\{D_i, D_j, D_k\} \subset M$, updating the maximal scale found if necessary in steps (\ref{Algorithm-line:Triplets-start-if})-(\ref{Algorithm-line:Triplets-end-if}). By Lemma \ref{Lemma:triplets}, every \v{C}ech scale calculation over any triplet, requires just one application of the bisection method. This implies the correctness of the algorithm.

Additionally, the condition in step (\ref{Algorithm-line:Triplets-condition}) avoids calculating unnecessary \v{C}ech scales of triplets $N=\{D_i, D_j, D_k\}$. In effect, if $\lambda^*$ is the maximal \v{C}ech scale found until the verification of the triplet $N$, and the condition in step (\ref{Algorithm-line:Triplets-condition}) does not satisfy, i.e.
\[ \mu_N \leq \sqrt{4/3}\nu_M < \lambda^* ,\]
then, whatever is the \v{C}ech scale of $N$, it would be not greater than $\lambda^*$.
\end{proof}

The computational evidence to support the Algorithm \ref{Algorithm:Cech-scale-triplets} is more efficient than Algorithm \ref{Algorithm:Cech-scale}, is given in Figure \ref{Graph:Cech.scale-vs-triplets}. The graphic shows the average time (in seconds) to computation of both algorithms, with respect to the number of disks in a randomly generated 2-disk system (see Remark \ref{Remark:computer}).

\begin{figure}[hbt]
\centering
\includegraphics[width=0.8\textwidth]{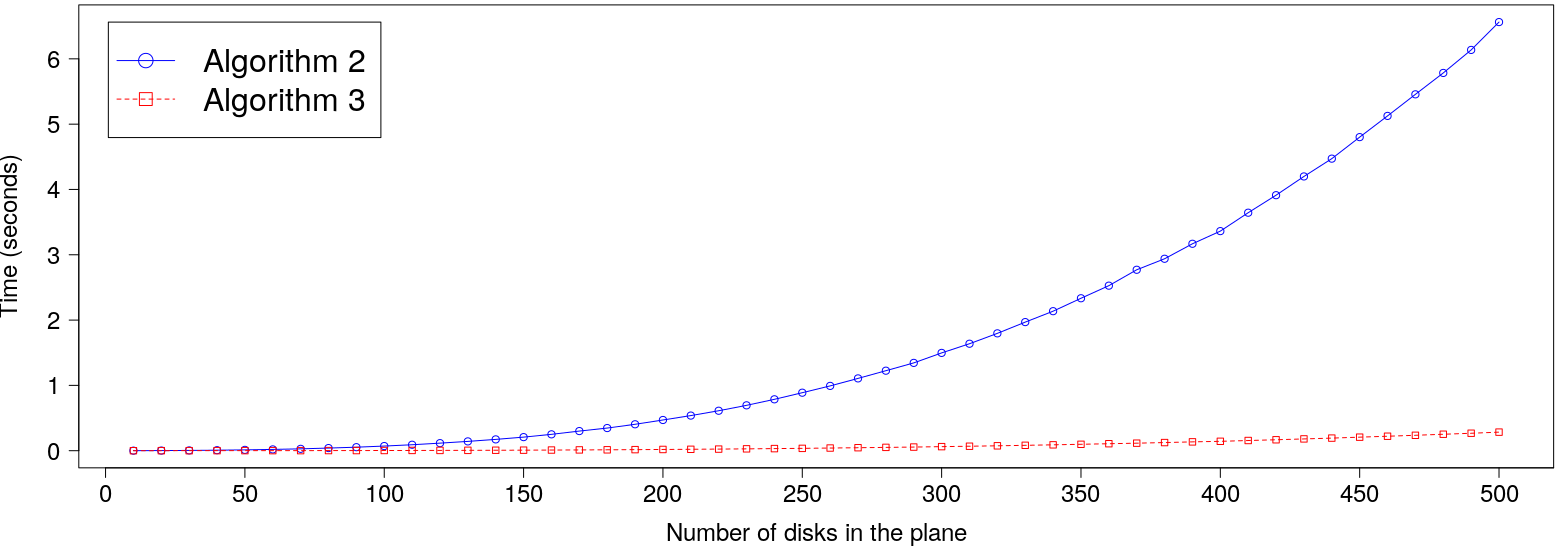}
\caption{Average time of the script \texttt{Cech.scale}.}
\label{Graph:Cech.scale-vs-triplets}
\end{figure}

\section{Applications} \label{Section:applications}

\subsection{The miniball problem}
The miniball problem or smallest-circle problem in the euclidean space is a classical problem, proposed by James J. Sylvester in 1857. 

Given a finite point cloud $N \subset \mathbb{R}^d$ the miniball problem consists in finding the center $c\in \mathbb{R}^d$ and minimum radius $r \in \mathbb{R}^+$ of a $d$-disk $D = D(c;r) \subset \mathbb{R}^d$ such that $N \subset D$.

There exist many different approaches to solve this problem, and a variety of algorithms to reach the miniball data (e.g. \cite{Fischer:2003, Welzl:1991}). In fact, the \v{C}ech scale has a close relation with the miniball problem, as we establish in the next lemma. 

\begin{lemma}
Let $N$ be a finite point cloud in $\mathbb{R}^d$, and let $N_1$ be the associated $d$-disk system defined by
\[ N_1:= \{D_i(c_i; 1) \subset \mathbb{R}^d \mid c_i \in N\}. \]
Then, the \v{C}ech scale $\mu_{N_1}$ is the radius of the minimal enclosing ball of $N$, and the intersection point $\{c_{N_1}\} = \bigcap_{c_i \in N} D_i(c_i; \mu_{N_1})$ its center.
\end{lemma}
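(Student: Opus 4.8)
The plan is to identify the defining condition of the \v{C}ech scale of $N_1$ with the min-max characterization of the minimal enclosing ball. First I would observe that a point $p \in \mathbb{R}^d$ lies in the rescaled intersection $\bigcap_{c_i \in N} D_i(c_i; \lambda)$ if and only if $\Vert p - c_i \Vert \le \lambda$ for every $i$, that is, if and only if $\max_{c_i \in N} \Vert p - c_i \Vert \le \lambda$. Consequently the rescaled system $(N_1)_\lambda$ is a \v{C}ech system exactly when there exists a point $p$ with $\max_i \Vert p - c_i \Vert \le \lambda$, which holds precisely when $\lambda \ge r^*$, where
\[ r^* := \min_{p \in \mathbb{R}^d} \max_{c_i \in N} \Vert p - c_i \Vert. \]

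Next I would note that $r^*$ is, by definition, the radius of the minimal enclosing ball of $N$: a disk $D(c; r)$ contains $N$ if and only if $\max_i \Vert c - c_i \Vert \le r$, so the smallest admissible radius is $\min_c \max_i \Vert c - c_i \Vert = r^*$, attained at a center $c^*$ (the minimum is attained because $p \mapsto \max_i \Vert p - c_i \Vert$ is continuous and tends to $\infty$ as $\Vert p \Vert \to \infty$). From the description of the \v{C}ech-scale set it then follows that
\[ \mu_{N_1} = \inf\{ \lambda \mid \lambda \ge r^* \} = r^*, \]
which is the first assertion.

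For the center, I would invoke Lemma \ref{Lemma:unique-point}: at the \v{C}ech scale $\mu_{N_1}$ the intersection $\bigcap_{c_i \in N} D_i(c_i; \mu_{N_1})$ is a single point $c_{N_1}$. By the equivalence above this point satisfies $\max_i \Vert c_{N_1} - c_i \Vert \le \mu_{N_1} = r^*$, and since $r^*$ is the minimum possible value of $\max_i \Vert p - c_i \Vert$, the point $c_{N_1}$ attains it. Hence $c_{N_1}$ is a minimizer, i.e.\ a center of the minimal enclosing ball of $N$.

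The statement is essentially a dictionary translation, so I do not expect a serious obstacle; the points requiring care are the attainment of the infimum (which, as remarked after Definition \ref{Definition:VR-Cech-scales}, follows from the completeness of $\mathbb{R}^d$ and the closedness of the rescaled intersections) and the uniqueness of the minimal enclosing ball. For the latter I would note that if two distinct centers both realized radius $r^*$, the ball centered at their midpoint would enclose $N$ with strictly smaller radius --- a convexity argument parallel to the one used in Lemma \ref{Lemma:unique-point} --- contradicting minimality; the uniqueness of $c_{N_1}$ furnished by that lemma then pins down $c_{N_1} = c^*$.
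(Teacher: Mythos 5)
Your proposal is correct and takes essentially the same route as the paper: both rest on the dictionary that a point $p$ lies in $\bigcap_{c_i\in N} D_i(c_i;\lambda)$ exactly when the ball $D(p;\lambda)$ encloses $N$, so the \v{C}ech scale is the minimal enclosing radius and the unique intersection point from Lemma \ref{Lemma:unique-point} is the center. You are in fact somewhat more careful than the paper, which invokes the uniqueness of the miniball without justification, whereas you supply the midpoint convexity argument and the attainment of the min-max.
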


\begin{proof}
Let $\mu_{N_1}$ be the \v{C}ech scale of the disk system $N_1$ and let $c_{N_1}$ be the intersection point of the $\mu_M$-rescaled disk system. Then the point $c_{N_1}$ belongs to every disk $D_i(c_i; \mu_{N_1}\cdot 1)$, i.e. $\Vert c_i - c_{N_1} \Vert \leq \mu_{N_1}$ for any $c_i \in N$, thus $N \subset D(c_{N_1}; \mu_{N_1})$. On the other hand, by definition of \v{C}ech scale, $\mu_{N_1}$ is the minimal radius (scale) with such property. Therefore, by uniqueness, $D(c_{N_1}; \mu_{N_1})$ must be the minimal ball enclosing the point cloud $N$.
\end{proof}

In particular, for a point cloud $N$ in the plane we can apply our algorithm \texttt{Cech.scale} (Algorithm \ref{Algorithm:Cech-scale-triplets}) to the 2-disk system $N_1$, and get the minimal enclosing ball of $N$. However, the \v{C}ech scale of an arbitrary 2-disk system $M=\{D_1(c_1;r_1), \ldots, D_m(c_m;r_m)\}$ cannot be obtained from the minimal enclosing ball data of the point cloud $M_0=\{c_1, \ldots, c_m\}$.

In Figure \ref{Figure:Miniball} we show a point cloud $N$ (black dots) and the 2-disk system $N_1$ (blue circles). Applying the \texttt{Cech.scale} script to $N_1$ we get the \v{C}ech scale $\mu_{N_1}$ (radio of the red circle) and the point $c_{N_1}$ (red point, center of the red circle).

\begin{figure}[hbt]
\centering
\includegraphics[width=0.35\textwidth]{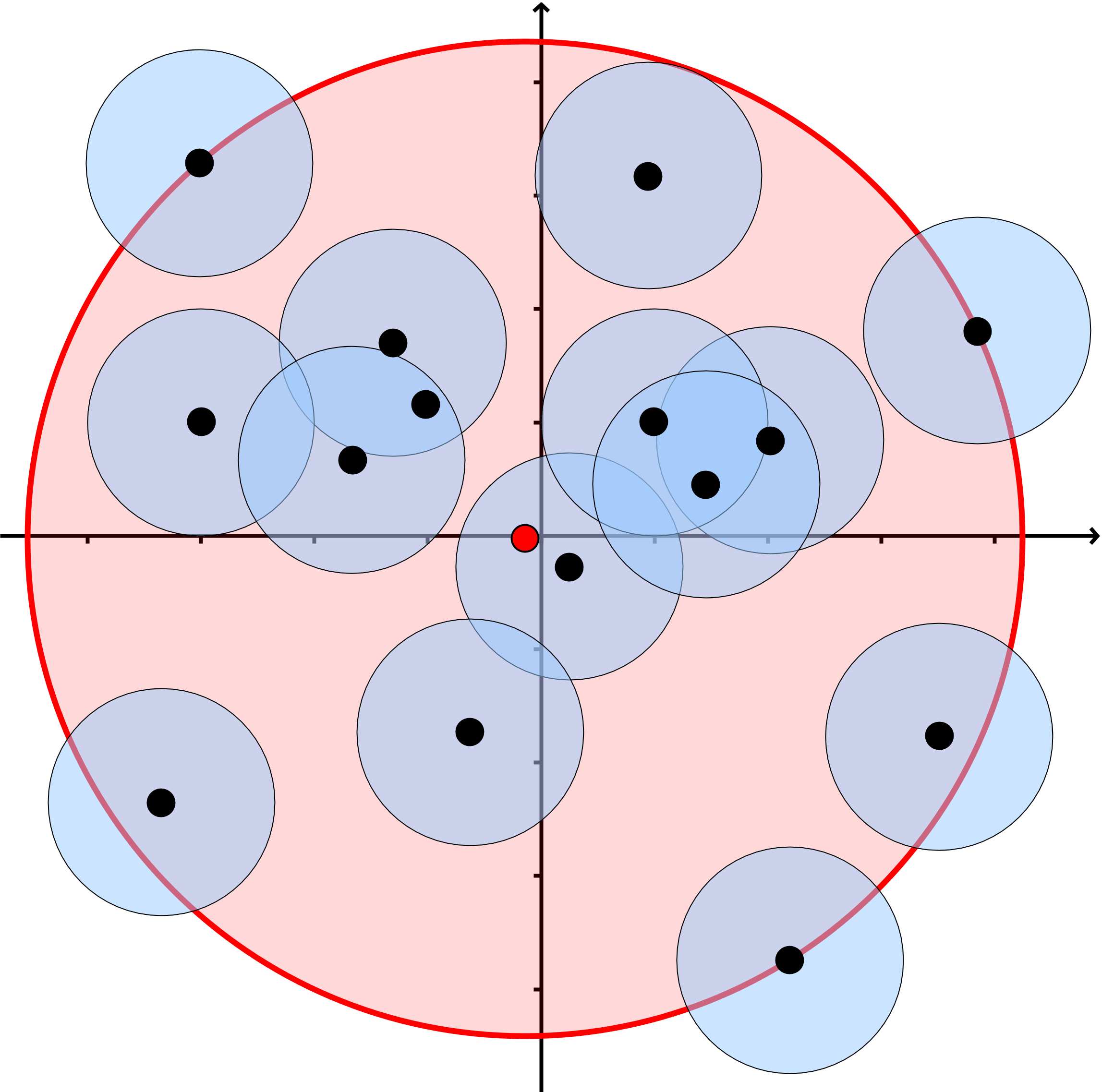}
\caption{The miniball of the point cloud $N$.}
\label{Figure:Miniball}
\end{figure}

For the miniball problem there are many efficient algorithms available online, which are easy to find. For example, the \texttt{C++} script in \cite{Fischer:Github} can compute the miniball for point clouds in any dimension (efficiently up to dimension 10,000). Such algorithms are not comparable with the \texttt{Cech.scale} algorithm because if only disk systems with equal (and unitary) radii were considered, several issues that were addressed in the case of different radii would be avoided.

\subsection{The algorithm {\normalfont \texttt{Cech.scale}} for higher dimensional disk systems}
It is not clear how to generalize the Algorithm \ref{Algorithm:Cech-scale-triplets} to determine the \v{C}ech scale of a disk system in $\mathbb{R}^d$ with $d>2$. However, it is possible to calculate the \v{C}ech scale if the $d$-disk system consists of only three disks. This makes it possible to calculate the 2-skeleton associated with a $d$-disk system in an arbitrary dimension. 

The relevance of this application lies in the possibility of calculating the 2-dimensional filtered simplicial \v{C}ech structure of a disk system immersed in a high-dimensional euclidean space. Many applications in topological data analysis concerns to the study of low dimensional topological features associated to a point data cloud immersed in a high-dimensional representation space. 

The key observation is that 
\[ \bigcap_{i = 1}^3 D(c_i; r_i) \neq \emptyset \Leftrightarrow \bigcap_{i = 1}^3 (D(c_i; r_i) \cap P) \neq \emptyset, \]
where $P$ is the affine plane generated by the set $\{c_1, c_2, c_3 \}$.

Thus, the problem of determining whether $\displaystyle \bigcap_{i = 1}^3 (D(c_i; r_i) \cap P)$ is empty or not, can be treated as one in the plane, constructing a disk system in $\mathbb R^2$ that preserves the affine configuration of the points $\{c_1, c_2, c_3\}$ in the affine space $P \subset \mathbb{R}^d$. 
To do this,  we set the first center $c_1$ as the origin in $\mathbb R^2$, and ``translate'' the others centers preserving their original configuration, taking care of moving the second center on the $x$-axis, as in Figure \ref{Figure:Affin-system}.

\begin{figure}[hbt]
\centering
\includegraphics[width=0.42\textwidth]{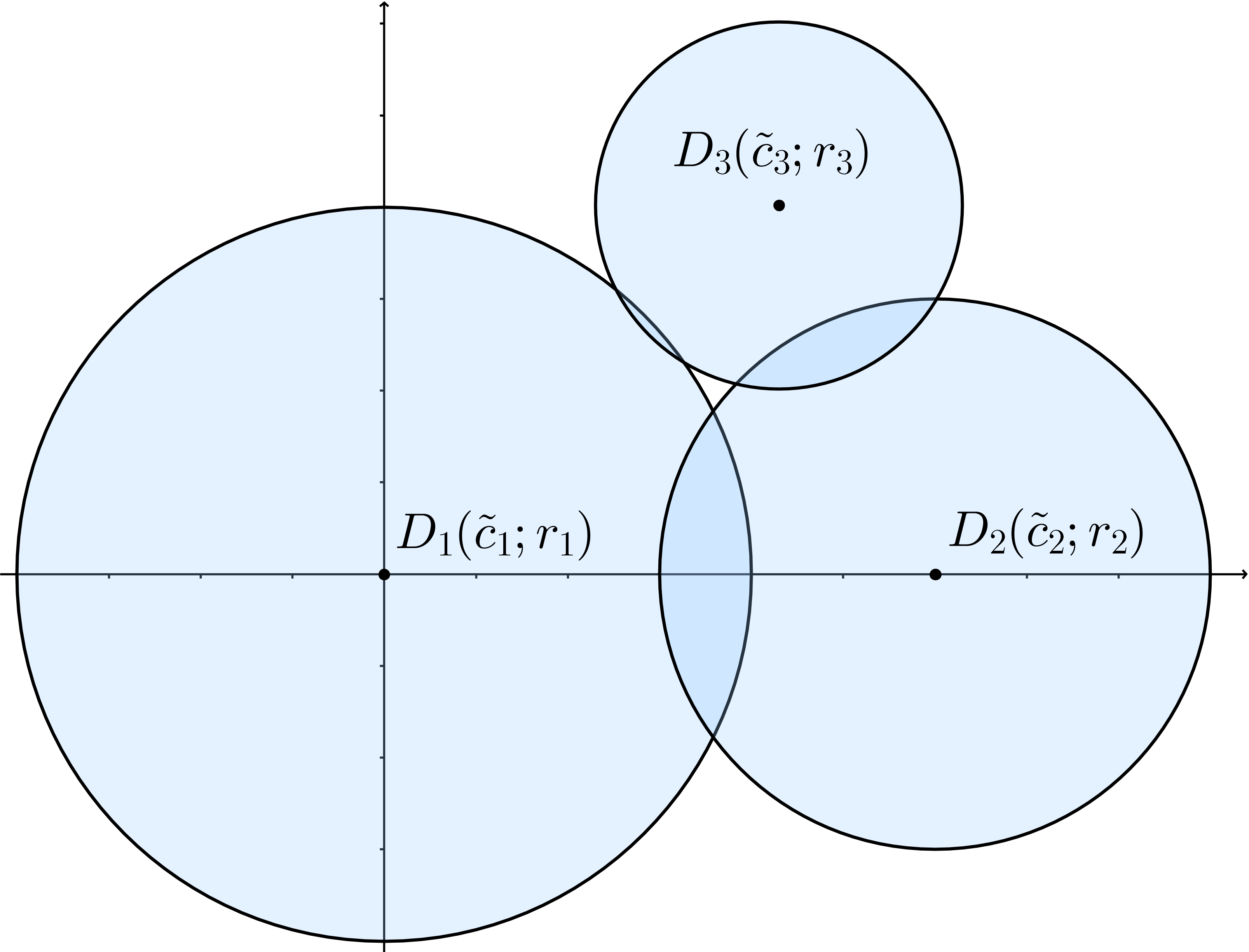}
\caption{Affin configuration of the $d$-disk system.}
\label{Figure:Affin-system}
\end{figure}

More precisely, to any $d$-disk system with three elements, say $M = \{D_1(c_1; r_1), D_2(c_2; r_2), D_3(c_3; r_3)\}$, we associate the following 2-disk system, which clearly preserves the affine configuration of the original centers $\{c_1,c_2,c_3\}$,
\[ \mathrm{Aff}(M) = \{D_1(\tilde{c}_1; r_1), D_2(\tilde{c}_2; r_2), D_3(\tilde{c}_3; r_3)\} \]
where $\tilde{c}_1 := (0,0)$, $\tilde{c}_2 := (\Vert c_2 - c_1 \Vert, 0)$, $\tilde{c}_3 := (\Vert c_3 - c_1 \Vert \cos(\theta), \Vert c_3 - c_1 \Vert \sin(\theta))$, where $\theta$ is the angle between the vectors $c_2 - c_1$ and $c_3 - c_1$, which satisfies the following relationship: 
\[ \cos \theta = \dfrac{\langle c_2 - c_1, c_3 - c_1 \rangle}{\Vert c_2 - c_1 \Vert \cdot \Vert c_3 - c_1 \Vert}.\]

The next algorithm is a variant of Algorithm \ref{Algorithm:Cech-weight-function}, taking as input a $d$-disk system in $\mathbb{R}^d$ and a nonnegative parameter $\lambda$, and as output the \v{C}ech weight function of the 2-skeleton of the generalized \v{C}ech complex structure. The algorithm first preprocess each triplet of $d$-disks as a 2-disk system, then the \v{C}ech scale is calculated.

\begin{algorithm} \label{Algorithm:2-skeletal-Cech-filtration}
    \SetKwInOut{Input}{Input}
    \SetKwInOut{Output}{Output}
    \Input{A $d$-disk system $M$ and a parameter $\lambda \geq 0$.}
    \Output{The \v{C}ech-weight function $\omega : \mathscr{C}_M(\lambda)^{(2)} \to \mathbb{R}$.}
    Calculate $\mathscr{C}_M(\lambda)^{(1)}$\;
   	\For{$\{D_i, D_j\} \in \mathscr{C}_M(\lambda)^{(1)}$}{
   		$\mathrm{LN}_{ij} \leftarrow $ $\lambda$-\texttt{LowerNbrs}$(D_i)\ \cap \ \lambda$-\texttt{LowerNbrs}$(D_j)$\;
   		\For{$D_k \in \mathrm{LN}_{ij}$}{
    		$N \leftarrow \mathrm{Aff}(\{D_i(c_i; r_i), D_j(c_j; r_j), D_k(c_k; r_k)\})$\;
			Calculate $\omega(N) \leftarrow \mu_N$\;
			\If{$\omega(N) \leq \lambda$}{
				Update $\mathscr{C}_M(\lambda)^{(2)} \leftarrow \mathscr{C}_M(\lambda)^{(2)} \cup \{D_i, D_j, D_k\}$\;
			}
   		}
   	}
	\Return($\mathscr{C}_M(\lambda)^{(2)}, \omega : \mathscr{C}_M(\lambda)^{(2)} \to \mathbb{R}$)
    \caption{2-skeletal \v{C}ech-weight function.}
\end{algorithm}

Figure \ref{Graph:Cech.scale.triplet-preprocessed.system} shows the performance (in $10^{-6}$ seconds) of the \texttt{C/C++} script \texttt{Cech.scale} (available in \cite{GCS}) and the preprocessing of the $d$-disk system to a 2-disk system.
% \newpage

\begin{figure}[hbt]
\centering
\includegraphics[width=0.9\textwidth]{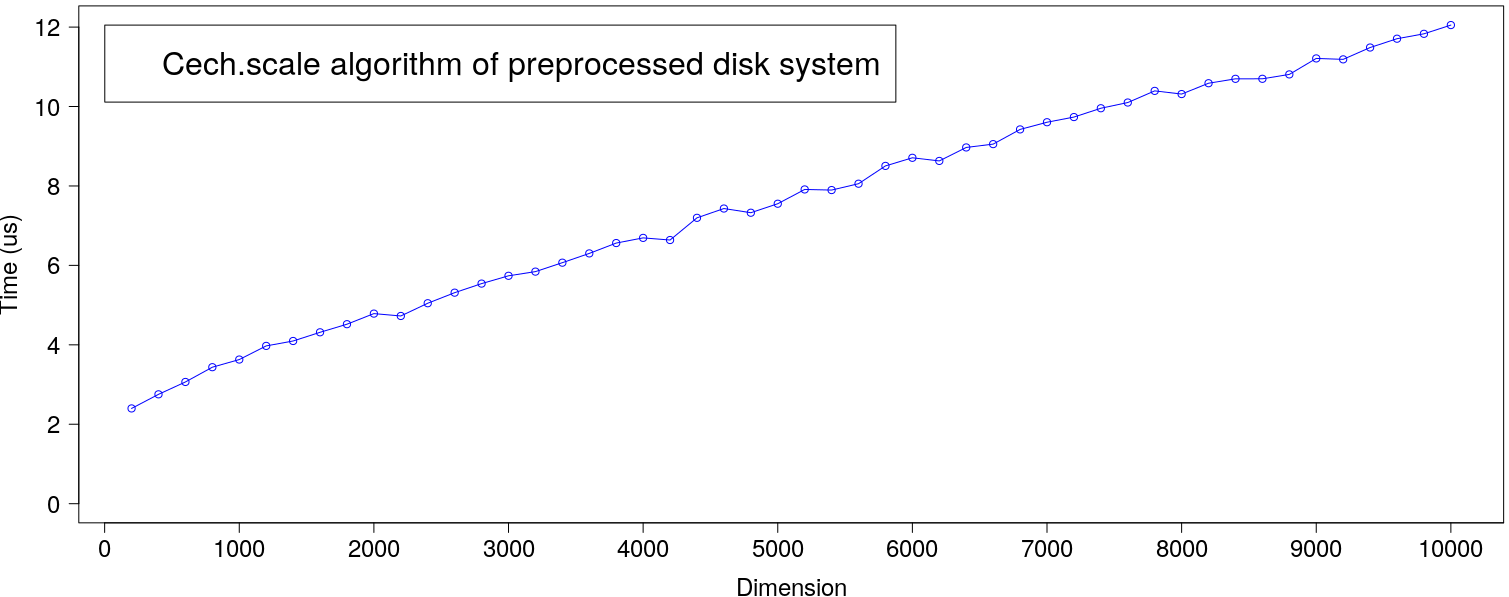}
\caption{Average time ($\mu$s) in high dimensions of the \texttt{Cech.scale} script and preprocessing disk systems.}
\label{Graph:Cech.scale.triplet-preprocessed.system}
\end{figure}

\begin{remark} \label{Remark:computer}\normalfont
All our timings were done on a 64-bit GNU/Linux machine with two Intel Xeon processors (3.40 GHz), although our script were not threaded and only one core was used per process. We measured all the timings with \texttt{clock()} from the Standard C library. The average times in both graphics (Figure \ref{Graph:Cech.scale-vs-triplets} and Figure \ref{Graph:Cech.scale.triplet-preprocessed.system}) are the mean times for $10^4$ repetitions of each algorithm, for every number of disks multiple of 10 in Figure \ref{Graph:Cech.scale-vs-triplets} from 10 to 500, and for every dimension multiple of 200 in Figure \ref{Graph:Cech.scale.triplet-preprocessed.system} from 200 to 10000.
\end{remark}

\section*{Acknowledgements}
The author J.-F. Espinoza acknowledges the financial support of PRODEP and of the Universidad de Sonora, as well as the ACARUS (High Performance Computing Area) for the support in the access to the clusters.

\bibliographystyle{abbrv}

\end{document}